\newcommand{\Rset}{\mathbb{R}}
\newcommand{\Dscr}{\mathcal{D}}
\newcommand{\Hscr}{\mathcal{H}}
\newcommand{\Pscr}{\mathcal{P}}
\begin{document}
\title{The Salesman's Improved Tours for Fundamental Classes} 
\author{Sylvia Boyd\thanks{SEECS, University of Ottawa, Ottawa, Canada.  Partially supported by the National Sciences and Engineering Research Council of Canada.  This work was done during visits in  Laboratoire G-SCOP, Grenoble; support from the CNRS and   Grenoble-INP is gratefully acknowledged.} and  
Andr\'as Seb\H{o}\thanks{CNRS, Laboratoire G-SCOP, Univ.~Grenoble Alpes, Supported by LabEx PERSYVAL-Lab (ANR 11-LABX-0025), \'equipe-action GALOIS.}}



\institute{}

\begingroup
\makeatletter
\let\@fnsymbol\@arabic
\maketitle
\endgroup

\begin{abstract}
Finding the exact integrality gap $\alpha$ for the LP relaxation of the metric Travelling Salesman Problem (TSP) has been an open problem for over thirty years, with little progress made.  It is known that $4/3 \leq \alpha \leq 3/2$, and a famous conjecture states $\alpha = 4/3$.  It has also been conjectured that there exist half-integer basic solutions of the linear program for which the highest integrality gap is reached.  

For this problem, essentially two ``fundamental'' classes of instances have been proposed.  This  fundamental property means that in order to show that the integrality gap is at most $\rho$ for all instances of the metric TSP, it is sufficient to show it only for the instances in the fundamental class.  However, despite the importance and the simplicity of such classes, no apparent effort has been deployed for improving the integrality gap bounds for them.
In this paper we take a natural first step in this endeavour, and consider the $1/2$-integer points of one such class.  We successfully improve the upper bound for the integrality gap from $3/2$ to $10/7$ for a superclass of these points for which a lower bound of $4/3$ is proved.

A key role in the proof of this result is played by finding Hamiltonian cycles whose existence is equivalent to Kotzig's result on "compatible Eulerian tours", and which lead us to delta-matroids for developing the related algorithms.  Our arguments also involve other innovative tools from combinatorial optimization with the potential of a broader use. 
 
\end{abstract}

\section{Introduction}\label{Intro}
Given the complete graph $K_n = (V_n, E_n)$ on $n$ nodes with non-negative edge costs $c
\in {\mathbb R}^{E_n}$, the \emph{Traveling Salesman Problem}
(henceforth TSP) is to find a Hamiltonian cycle of minimum cost in
$K_n$.  When the costs are \emph{metric}, i.e. satisfy the triangle inequality
$c_{ij} + c_{jk} \geq c_{ik}$ for all $i,j,k \in V_n$, the problem
is called the \emph{metric} TSP. If the metric is defined by the shortest (cardinality) paths of a graph, then it is called a {\em graph metric};  the TSP specialized to graph metrics  is the {\em graph TSP}.

For $G=(V,E)$, $x \in {\mathbb R}^E$ and $F \subseteq E$,
$x(F):=\sum_{e \in F} x_e$; for $U \subseteq
V$, $\delta(U):=\delta_G(U):=\{uv \in E : u \in U, v \in V\setminus U\}$; $E[U]:=\{uv \in E : u \in U, v \in U\}$. The scalar product of vectors $a$ and $x$ of the same dimension will simply be denoted by $ax$. A {\em path} is the edge set of a connected subgraph with two nodes of degree $1$ and all other nodes of degree $2$, and a {\em cycle} is the edge set of a connected subgraph with all node degrees equal to $2$. 

A natural linear programming relaxation for the TSP is the following \emph{subtour LP}:
\begin{eqnarray}
\mbox{minimize } cx \label{oby}\\
\mbox{subject to:  } x(\delta(v)) = 2 &\quad \mbox{ for all } v \in
V_n,\label{deg}\\
x(\delta(S)) \geq 2 &\quad \mbox{ for all } \emptyset\neq S\subsetneq V_n, \label{st}\\
0 \leq x_e \leq 1 &\quad \mbox{for all } e \in E_n. \label{ul}
\end{eqnarray}

\noindent For a given cost function $c\in \Rset^{E_n}$, we use $LP(c)$ to denote the optimal solution value for the subtour LP and $OPT(c)$ to denote the optimal solution value for the TSP. The polytope associated with the subtour LP, called the \emph{subtour elimination polytope} and denoted
by $S^n$, is the set of all vectors $x$ satisfying the constraints of the subtour LP,
i.e. 
$S^n = \lbrace x \in \Rset^{E_n} : x \mbox{ satisfies } (\ref{deg}), (\ref{st}), (\ref{ul}) \rbrace$.

The metric TSP is known to be NP-hard.  One approach taken for finding reasonably good solutions is to
look for a {\em $\rho$-approximation
	algorithm} for the problem, i.e.  a polynomial-time algorithm that always computes a solution
of value at most $\rho$ times the optimum. Currently the best such
algorithm known for the metric TSP is the algorithm due to Christofides \cite{chr} for which
$\rho = \frac{3}2$.  Although it is widely believed that a better approximation algorithm is possible, no one has been able to improve upon Christofides algorithm 
in four decades. For arbitrary nonnegative costs not constrained by the triangle inequality there does not
exist a $\rho$-approximation algorithm for any $\rho\in\mathbb{R}$ unless $P=NP$, since such an algorithm would be able to decide if a given graph is Hamiltonian.

For an approximation guarantee of a minimization problem one needs lower bounds for the optimum, often provided by linear programming.  For the metric TSP with cost function $c$, a commonly used lower bound is $LP(c)$. Then finding a solution, i.e. a Hamiltonian cycle, of objective value at most $\rho\,LP(c)$ in polynomial time implies at the same time a $\rho$-approximation algorithm, and establishes that the {\em integrality gap}  $OPT(c)/LP(c)$  is at most $\rho$ for any input.  (The input consists of the nodes and the metric function $c$ on pairs of nodes; again, without the metric assumption this ratio is unbounded already for the graph TSP by putting infinite costs on non-edges of the defining graph.)  Since up until now the bounds on the integrality gap have been proved via polynomial algorithms constructing the Hamiltonian cycles, the approximation ratio for metric costs is conjectured not to be larger than the integrality gap.  

It is known that the integrality gap for the subtour LP for any metric $c$ is at most $\frac{3}2$ (\cite{cunningham}, \cite{SW}, \cite{Wolsey80}),
however no example for
which the integrality gap is greater than $\frac{4}3$ is known.  In fact, a famous conjecture, often referred to as the \emph{$\frac{4}3$ Conjecture},
states the following:

\begin{conjecture} \label{conj:4over3}
	The integrality gap for the subtour LP with metric $c$ is at most $\frac{4}3$.
\end{conjecture}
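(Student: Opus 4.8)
The plan is to follow the reduction program that the title and abstract foreshadow, rather than to attack arbitrary metrics head-on. First I would invoke the \emph{fundamental class} property: since bounding the integrality gap by some $\rho$ on such a class forces the same bound for every metric instance, it suffices to prove $OPT(c) \le \tfrac{4}{3}\,LP(c)$ only for the cost functions $c$ associated with that class. This replaces a statement about all of $\Rset^{E_n}$ by one about a combinatorially restricted family, in which the vertices $x$ of $S^n$ that must be rounded have tightly controlled structure.

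Second, following the half-integrality conjecture recalled in the abstract, I would reduce to basic optimal solutions $x \in \{0,\tfrac12,1\}^{E_n}$. For such an $x$, the degree constraints (\ref{deg}) force each node to have $(d_1,d_{1/2}) \in \{(2,0),(1,2),(0,4)\}$, where $d_1,d_{1/2}$ count its incident edges of value $1$ and $\tfrac12$ respectively. Hence the edges with $x_e=1$ form vertex-disjoint paths, and after contracting each such path to a single node the edges with $x_e=\tfrac12$ form a $4$-regular multigraph $G'$. The target then becomes a Hamiltonian cycle $H$ with $c(H) \le \tfrac43\,cx$, to be extracted from the half-weighted support of $G'$.

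Third, the combinatorial heart, I would select the right cycle inside $G'$ by means of \emph{Kotzig's theorem on compatible Eulerian tours}. A transition system at each degree-$4$ node pairs up its four incident half-edges; choosing these transitions so as to forbid the pairings that would split off short subtours yields exactly the Hamiltonian structure we need, and Kotzig's theorem guarantees its existence. To make this constructive and, crucially, to minimize $c(H)$ rather than merely certify existence, I would encode the admissible transition systems as the feasible sets of a \emph{delta-matroid} on the half-edges, over which a linear objective can be optimized in polynomial time; this produces the cheapest compatible tour and hence the algorithm underlying the bound.

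The main obstacle, and the reason I expect the paper to fall short of $\tfrac43$, is twofold. The reduction to half-integer vertices in the second step is itself only conjectural, so a complete proof of Conjecture~\ref{conj:4over3} cannot rest on it. Moreover, even granting half-integrality, the cost charged against $cx$ in the worst local configurations of $G'$ does not obviously telescope down to the factor $\tfrac43$: a cycle-by-cycle accounting of the kind this plan produces seems to lose a constant, delivering a ratio strictly between $\tfrac43$ and $\tfrac32$. This is consistent with the $\tfrac{10}{7}$ upper bound that the abstract announces for a superclass of the half-integer points, alongside the $\tfrac43$ lower bound proved there. Closing the residual gap would demand both a proof of the half-integrality reduction and a genuinely global charging scheme, finer than the local one sketched here.
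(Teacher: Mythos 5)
The statement you were asked to prove is a \emph{conjecture}: the paper contains no proof of it, and it has been open for decades, so there is no ``paper's own proof'' to compare against. Your proposal, to its credit, does not pretend otherwise. What you sketch is essentially the paper's actual program --- reduce via a fundamental class, restrict to $\tfrac12$-integer points, use Kotzig's theorem on Eulerian trails with forbidden bitransitions (Theorem~\ref{thm:Kotzig}) to extract Hamiltonian structure (Lemma~\ref{lem:Ham}), and handle optimization via delta-matroids --- and your final diagnosis is exactly the paper's situation: the half-integrality reduction (Conjecture~\ref{conj:half}) is itself only conjectural, and the charging scheme loses a constant, landing at $\tfrac{10}{7}$ for square points with a $\tfrac43$ lower bound (Theorem~\ref{thm:main}, Corollary~\ref{cor:intgapbds}) rather than closing the conjecture. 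So as an assessment of what can currently be proved, your proposal is accurate; as a proof of Conjecture~\ref{conj:4over3}, it is not one, and you say so yourself.

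Two concrete corrections to the sketch, both of which the paper addresses. First, your two reductions do not compose for the class you name: the Boyd-Carr construction does \emph{not} preserve half-integrality (the paper notes that $\tfrac12$-integer vertices of $S^n$ map to Boyd-Carr points with values in $\{0,\tfrac14,\tfrac12,\tfrac34,1\}$), so ``$\tfrac12$-integer Boyd-Carr points'' is not itself a fundamental class and cannot serve as the sole target of a proof of the conjecture. The Carr-Vempala construction does preserve half-integrality, which is why the paper's proposed route to Conjecture~\ref{conj:fourthird} goes through Theorem~\ref{CarrVempalahalf} combined with Conjecture~\ref{conj:half}, i.e.\ through $\tfrac12$-integer Carr-Vempala points, not Boyd-Carr points. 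Second, your application of Kotzig contracts the wrong objects: if you contract the $1$-paths so that the $\tfrac12$-edges form a $4$-regular multigraph, an Eulerian trail of that graph traverses \emph{every} $\tfrac12$-edge, which is not the structure needed. The paper instead contracts the squares (the $\tfrac12$-cycles), so that the resulting $4$-regular multigraph has the $1$-edges as its edge set, and forbids at each contracted square the bitransition pairing the $1$-edges incident to diagonally opposite corners; the compatible Eulerian trail then extends, via a perfect matching inside each square, to a Hamiltonian cycle of the support graph containing all $1$-edges --- which is what properties (A) and (B) and the $T$-join argument of Theorem~\ref{thm:sufficient} actually consume.
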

\noindent Well-known examples have an integrality gap asymptotically equal to $\frac{4}3$. 
In almost thirty years, there have been no improvements made to the upper bound of $\frac{3}2$ or lower bound of $\frac{4}3$ for the integrality gap for the subtour LP with metric $c$.

Define a \emph{tour} to be the edge set of a spanning Eulerian (i.e. connected with all degrees even) multisubgraph of $K_n$.  If none of the multiplicities can be decreased, then all multiplicities are at most two; however, there are some technical advantages to allowing higher multiplicities. 

For any multiset $J\subseteq E_n$, the {\emph {incidence vector of $J$}, denoted by $\chi^J$, is the vector in $\Rset^{E_n}$ for which $\chi^{J}_e$ is equal to the number of copies of edge $e$ in $J$ for all $e \in E_n$. We use $T^n$ to denote the convex hull of incidence vectors of tours of $K_n$, and for costs $c\in \Rset^{E_n}$ we use $OPT_{T^n}(c)$ to denote the cost of a minimum cost tour.  Note that $T^n$ is an unbounded polyhedron, as $T^n + \Rset^{E_n}_+ = T^n$: each edge may have arbitrarily large multiplicity.



	For any  $\rho \in {\mathbb R}$,  $\rho\,S^n$ denotes $\{y \in {\mathbb R}^{E_n} : y = \rho x,  x \in S^n \}$. The definition of the integrality gap can be reformulated in terms of a containment relation between the two polyhedra $\rho S^n$ and $T^n$ (Theorem~\ref{thm:vectorform}) that does not depend on the objective function. We not only use this reformulation here, but also develop a specific way of exploiting  it, and for our arguments this is the very tool that works.  Showing for some constant $\rho\in\mathbb{R}$ that $\rho\,x \in T^n$ for each $x\in S^n$, i.e. that $\rho\,x$ is a convex combination of incidence vectors of tours, gives an upper bound of $\rho$ on the integrality gap for the subtour LP:  it implies that for each $x\in S^n$ and any cost function $c\in \Rset^{E_n}$ such that $cx = LP(c)$, at least one of the tours in the convex combination has cost at most $\rho\,(cx) = \rho\,LP(c)$. If the costs are metric, this tour can be  shortcut to a TSP solution of cost at most $\rho\,LP(c)$, giving a ratio of $OPT(c)/LP(c) \leq \rho$. A {\em shortcut} means to fix an Eulerian tour and replace a sequence of nodes $a,b,c)$ by $a,c$, whenever $b$ has already been visited by the tour.  The essential part ``(i) implies (ii)'' of the following theorem, due to Goemans \cite{goemans} (also see \cite{CV}), asserts that the converse is also true: if $\rho$ is at least the integrality gap then  $\rho\,S^n$ is a subset of $T^n$.  
	
	\begin{theorem}\label{thm:vectorform}\cite{goemans}\cite{CV}   Let $K_n = (V_n, E_n)$ be the complete graph on $n$ nodes and let $\rho\in\Rset, \rho\ge 1$. The following statements are equivalent: 
		\begin{itemize}
			\item[(i)] For any metric cost function $c:E_n\longrightarrow \Rset_+, \mbox{     }OPT(c)\le \rho LP(c)$.
			\item[(ii)] For any $x\in S^n$, $\rho x \in T^n$. 
			\item[(iii)] For any vertex $x$ of $S^n$, $\rho x \in T^n$. 	
		\end{itemize}
	\end{theorem}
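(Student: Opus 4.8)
The plan is to prove the three equivalences by disposing of the two easy directions first and then concentrating on the substantial implication (i)$\Rightarrow$(ii). Throughout I would exploit two structural facts: that $S^n$ is a bounded polytope (the bounds $0\le x_e\le 1$ confine it to the unit cube, so it has finitely many vertices and every point is a convex combination of them), and that $T^n$ is a closed convex polyhedron whose recession cone contains the nonnegative orthant, since $T^n+\Rset^{E_n}_+=T^n$.

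For (ii)$\Rightarrow$(iii) there is nothing to prove, as every vertex of $S^n$ belongs to $S^n$. For the converse (iii)$\Rightarrow$(ii), I would write an arbitrary $x\in S^n$ as a convex combination $x=\sum_i\lambda_i v_i$ of the vertices $v_i$ of $S^n$; scaling by $\rho$ and using linearity gives $\rho x=\sum_i\lambda_i(\rho v_i)$, and since each $\rho v_i\in T^n$ by hypothesis and $T^n$ is convex, $\rho x\in T^n$. The implication (ii)$\Rightarrow$(i) is the one already sketched before the theorem, so I would only record it: given metric $c$ and $x\in S^n$ with $cx=LP(c)$, the hypothesis writes $\rho x=\sum_i\lambda_i\chi^{J_i}$ as a convex combination of tours, and from $\sum_i\lambda_i\,c\chi^{J_i}=c(\rho x)=\rho\,LP(c)$ some tour $J_i$ has $c\chi^{J_i}\le\rho\,LP(c)$; shortcutting it via the triangle inequality yields a Hamiltonian cycle of cost at most $\rho\,LP(c)$, so $OPT(c)\le\rho\,LP(c)$.

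The heart of the matter is (i)$\Rightarrow$(ii), which I would prove by contraposition. Suppose some $x^*\in S^n$ satisfies $\rho x^*\notin T^n$. Because $T^n$ is a closed convex polyhedron, a separating hyperplane supplies $c\in\Rset^{E_n}$ and $\beta\in\Rset$ with $c(\rho x^*)<\beta\le cy$ for every $y\in T^n$. The crucial observation is that $c$ may be taken nonnegative: if some $c_e<0$, then adding large multiples of the unit vector $\mathbf 1_e$ to a tour stays in $T^n$ (the orthant lies in the recession cone) while driving $cy\to-\infty$, contradicting the lower bound $\beta$. With $c\ge 0$, the bounds $\beta\le OPT_{T^n}(c)$ and $cx^*\ge LP(c)$ give $\rho\,LP(c)\le c(\rho x^*)<OPT_{T^n}(c)$. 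I would then pass to the metric closure $\bar c$ of $c$ (shortest-path costs), which is metric with $\bar c\le c$: from $\bar c\le c$ and $x\ge 0$ one gets $LP(\bar c)\le LP(c)$, while replacing each edge of a $\bar c$-optimal Hamiltonian cycle by a corresponding shortest $c$-path yields a closed walk whose edge multiset is a tour of $c$-cost equal to the $\bar c$-length of the cycle, so $OPT_{T^n}(c)\le OPT(\bar c)$. Chaining these gives $\rho\,LP(\bar c)\le\rho\,LP(c)<OPT_{T^n}(c)\le OPT(\bar c)$, contradicting statement (i) applied to the metric $\bar c$.

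The main obstacle I anticipate is exactly this final maneuver: first securing the nonnegativity of the separating objective (which rests on the recession cone of $T^n$ containing $\Rset^{E_n}_+$), and then certifying the two comparisons $LP(\bar c)\le LP(c)$ and $OPT_{T^n}(c)\le OPT(\bar c)$ — that is, that passing to the metric closure only lowers the LP value and only raises the tour optimum. This is the device that lets a nonnegative but possibly non-metric separator be replaced by a genuine metric cost, the only kind to which hypothesis (i) applies.
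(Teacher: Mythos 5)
Your proof is correct, but there is nothing in the paper to compare it against: the paper does not prove Theorem~\ref{thm:vectorform} at all, it quotes it from \cite{goemans} and \cite{CV}, and the only portion argued in the text is the easy direction (ii)$\Rightarrow$(i), sketched in the paragraph preceding the statement. Your treatment of that direction (write $\rho x$ as a convex combination of tour vectors, pick a tour of cost at most $\rho\,LP(c)$, shortcut using the triangle inequality) coincides with that sketch, and your (ii)$\Leftrightarrow$(iii) reductions via Minkowski decomposition into vertices are standard and fine. Your proof of the essential direction (i)$\Rightarrow$(ii) --- strict separation of $\rho x^*$ from the closed convex polyhedron $T^n$, forcing the separating objective $c$ to be nonnegative because the recession cone of $T^n$ contains $\Rset^{E_n}_+$, and then passing to the metric closure $\bar c$ with the two comparisons $LP(\bar c)\le LP(c)$ and $OPT_{T^n}(c)\le OPT(\bar c)$ --- is sound and is essentially the argument of the cited sources. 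The only ingredients you invoke without proof are that $S^n$ is a polytope and that $T^n$ is a closed polyhedron satisfying $T^n+\Rset^{E_n}_+=T^n$; both are asserted in the paper, and the latter can be justified in one line if desired: any tour with an edge of multiplicity at least $3$ remains a tour after reducing that multiplicity by $2$, so $T^n$ equals the convex hull of the finitely many tours with multiplicities at most $2$ plus the cone $\Rset^{E_n}_+$, hence is a polyhedron with the stated recession cone. With that remark added, your argument is complete and self-contained.
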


	  By Theorem \ref{thm:vectorform},  Conjecture \ref{conj:4over3} can be equivalently  reformulated as follows:
	\begin{conjecture}\label{conj:fourthird}   The polytope $\frac{4}{3}\,S^n$ is contained in the polyhedron $T^n$,  that is, $\frac{4}{3}\,S^n \subseteq T^n$.
	\end{conjecture}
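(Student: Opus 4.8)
The plan is to obtain Conjecture~\ref{conj:fourthird} as nothing more than the polyhedral restatement of Conjecture~\ref{conj:4over3}, reading off both from Theorem~\ref{thm:vectorform} at the single value $\rho=\frac{4}{3}$. First I would check that the hypothesis of the theorem is met: since $\frac{4}{3}\ge 1$, the theorem applies verbatim with this choice of $\rho$. The remaining work is then purely a matter of matching statements.

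With $\rho=\frac{4}{3}$, statement (i) of Theorem~\ref{thm:vectorform} reads ``$OPT(c)\le\frac{4}{3}LP(c)$ for every metric $c$'', which is exactly the assertion of Conjecture~\ref{conj:4over3} that the integrality gap is at most $\frac{4}{3}$. Likewise, statement (ii) with $\rho=\frac{4}{3}$ reads ``$\frac{4}{3}x\in T^n$ for every $x\in S^n$'', i.e.\ $\frac{4}{3}S^n\subseteq T^n$, which is precisely Conjecture~\ref{conj:fourthird}. The equivalence (i) $\Leftrightarrow$ (ii) supplied by the theorem then yields that the two conjectures are the same statement, completing the reformulation.

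It is worth separating the two directions of this equivalence, since only one carries weight. The implication (ii) $\Rightarrow$ (i) is the routine direction already sketched in the discussion preceding Theorem~\ref{thm:vectorform}: if $\frac{4}{3}x$ is a convex combination of incidence vectors of tours, then for any metric $c$ with $cx=LP(c)$ the cheapest tour appearing in the combination costs at most $\frac{4}{3}LP(c)$, and shortcutting turns it into a Hamiltonian cycle of no greater cost. The reverse implication (i) $\Rightarrow$ (ii) is the substantive content of Goemans' theorem, resting on LP duality together with the reduction to vertices recorded in statement (iii) and the boundedness of $S^n$; this is the only place where genuine work would be required for a self-contained argument, and here I would simply invoke the cited result.

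Consequently, I expect no obstacle in the reformulation itself. The real difficulty lies entirely in \emph{deciding} whether Conjecture~\ref{conj:fourthird} holds, which is the long-standing open problem; the present paper attacks it not in full generality but only on restricted classes of points $x\in S^n$, establishing $\frac{4}{3}S^n\subseteq T^n$-type containment there with a weaker constant in the general step.
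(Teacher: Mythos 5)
Your proposal matches the paper exactly: the paper does not (and cannot) prove this statement, since it is the open $\frac{4}{3}$ Conjecture; it merely introduces it as the polyhedral reformulation of Conjecture~\ref{conj:4over3} obtained by applying Theorem~\ref{thm:vectorform} with $\rho=\frac{4}{3}$, which is precisely your argument. You correctly identify that the only provable content is the equivalence (i)~$\Leftrightarrow$~(ii) supplied by the cited theorem of Goemans, and that the containment $\frac{4}{3}S^n\subseteq T^n$ itself remains open.
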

	
	Given a vector $x \in S^n$, the \emph{support graph $G_x=(V_n, E_x)$ of $x$} is defined with $E_x = \{e\in E_n:  x_e >0 \}$.  We call a point $x \in S_n$ \emph{$\frac{1}2$-integer} if $x_e \in \{0, \frac{1}2, 1\}$ for all $e \in E_n$. For such a vector we call the edges $e \in E_n$ \emph{$\frac{1}2$-edges} if $x_e = \frac{1}2$ and \emph{$1$-edges} if $x_e = 1$.  Note that the $1$-edges of $\frac{1}2$-integer points form a set of disjoint paths that we call \emph{$1$-paths} of $x$, and the $\frac{1}2$-edges form a set of edge-disjoint cycles we call the \emph {$\frac{1}2$-cycles} of $x$. 
	
	For Conjecture \ref{conj:4over3}, it seems that $\frac{1}2$-integer vertices play an important role {(see \cite{BB},\cite{CR},\cite{SWZ}).  In fact it has been conjectured by Schalekamp, Williamson and van Zuylen \cite{SWZ} that a subclass of these $\frac{1}2$-integer vertices are the ones that give the largest gap. Here we state a weaker version of their conjecture:
		
		\begin{conjecture}\label{conj:half} 
			The integrality gap for the subtour LP is reached on $\frac{1}2$-integer vertices.
		\end{conjecture}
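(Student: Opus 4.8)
The plan is to recast the conjecture as a statement about a single scalar functional on the vertices of the subtour polytopes and then to look for a gap-preserving reduction to half-integer vertices. By Theorem~\ref{thm:vectorform}, the integrality gap equals $\sup_{n}\max_{x}\,t(x)$, where the maximum is over the (finitely many) vertices $x$ of $S^n$ and
\[
t(x):=\min\{\rho\ge 1 : \rho x\in T^n\}.
\]
This minimum is well defined and finite: since $x\ge 0$ and $T^n=T^n+\Rset^{E_n}_+$ is up-closed, $\rho x\in T^n$ for all large $\rho$, and $\{\rho:\rho x\in T^n\}$ is an up-closed half-line $[t(x),\infty)$; moreover $t(x)\ge 1$ because the degree constraint $x(\delta(v))=2$ forces $2\rho\ge 2$ against the cut inequality of $T^n$. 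Writing the description of the dominant $T^n$ as $\{y\ge 0: a_iy\ge b_i,\ i\in I\}$ with $a_i,b_i\ge 0$ (its facets being the cut inequalities $y(\delta(S))\ge 2$ together with the parity/blossom-type inequalities responsible for the gap), homogeneity gives the clean formula $t(x)=\max_{i\in I} b_i/(a_i x)$, and an index $i$ attaining the maximum certifies that $\rho x\notin T^n$ for $\rho<t(x)$. In this language Conjecture~\ref{conj:half} asserts exactly that
\[
\sup_{n}\max_{x}\,t(x)=\sup_{n}\max_{x\ \text{ half-integer}}\,t(x).
\]

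The natural attack is a \emph{rounding reduction}: given a vertex $x$ of $S^n$ with large $t(x)$, construct a $\tfrac12$-integer point $\tilde x$ of some $S^{n'}$, $n'\ge n$, with $t(\tilde x)\ge t(x)$. The device I would try is local gadget replacement, substituting for each sub-pattern of $x$ that violates half-integrality a fixed fractional gadget on a few auxiliary nodes that uses only the values $0,\tfrac12,1$ while reproducing the obstruction to membership in $T^n$. To certify $t(\tilde x)\ge t(x)$ without reconstructing $T^{n'}$ explicitly, I would transport the worst violated inequality: from the certificate $i$ for $x$ one must exhibit an inequality in the description of $T^{n'}$ that is violated by $\rho\tilde x$ at least as badly, that is, with ratio $b/(a\tilde x)\ge b_i/(a_i x)$. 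Two facts make this coherent. First, $t$ is positively homogeneous and behaves predictably under the standard operations (node splitting, subdivision, $1$-edge insertion) used in reductions to the fundamental classes. Second, the best known lower-bound families realizing the asymptotic ratio $4/3$ are themselves half-integer, so the target class is at least rich enough to exhibit the conjectured extremal behaviour.

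The crux, and the reason the statement is still only a conjecture, is that there is no structural control over the \emph{non}-half-integer vertices of $S^n$. These can have denominators growing without bound, no canonical decomposition of such a vertex into half-integer pieces is known, and it is therefore unclear that any bounded family of gadgets suffices; worse, a local replacement that restores half-integrality may simultaneously \emph{relax} the very cut or parity inequality that produced the large value of $t(x)$. Reconciling the two demands, namely staying inside the subtour polytope (a global $2$-edge-connectivity condition) while never weakening the worst violated facet, is precisely the step I expect to be hard. A further, more fundamental difficulty is that the two suprema above range over all $n$ with no a priori dimension bound: it is conceivable that the extremal vertices are genuinely non-half-integer at every finite $n$ and that the half-integer ones only catch up in the limit $n\to\infty$. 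Proving such a limiting equality would require asymptotic machinery beyond any finite gadget reduction, and I would regard developing that machinery as the true obstacle.
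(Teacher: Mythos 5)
This statement is Conjecture~\ref{conj:half} of the paper: it is stated, not proved. The paper presents it as a weakened form of the Schalekamp--Williamson--van Zuylen conjecture \cite{SWZ} and offers no proof of it anywhere; it is an open problem. So there is no ``paper proof'' to measure your text against, and, to your credit, your text does not pretend to be a proof either: you set up a framework, sketch an attack, and then explicitly concede that the decisive step (a gap-preserving rounding of arbitrary vertices to $\tfrac12$-integer ones) is not constructed. Your framework itself is sound: by Theorem~\ref{thm:vectorform} the integrality gap is indeed $\sup_n\max_x t(x)$ over vertices $x$ of $S^n$ with $t(x):=\min\{\rho\ge 1:\rho x\in T^n\}$, this $t$ is well defined and at most $\tfrac32$ by Theorem~\ref{thm:Wolsey}, the up-closedness $T^n+\Rset^{E_n}_+=T^n$ gives the half-line structure you use, and the conjecture is exactly the equality of the two suprema you write down.

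Where your sketch can be sharpened against what the paper actually does: the ``local gadget replacement'' you propose is precisely the mechanism behind the known fundamental-class reductions, and the paper records why it falls short of Conjecture~\ref{conj:half}. The Boyd--Carr reduction does \emph{not} preserve denominators (the paper notes that $\tfrac12$-integer vertices are mapped to points with values in $\{1,\tfrac12,\tfrac34,\tfrac14,0\}$), so it moves you \emph{out} of the half-integer class rather than into it; the Carr--Vempala reduction does preserve half-integrality, but only when the starting vertex is already $\tfrac12$-integer --- this is exactly the content of Theorem~\ref{CarrVempalahalf}, which reduces half-integer points to half-integer Carr--Vempala points and is the closest the paper comes to machinery around this conjecture. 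What is missing, in the paper as in your sketch, is any transformation that takes a vertex with large denominators to a $\tfrac12$-integer point without weakening the violated inequality certifying $t(x)$; your diagnosis that this is the true obstacle (together with the lack of any dimension bound on where the supremum is approached) is accurate, but it is a description of the open problem, not a resolution of it.
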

		
		Very little progress has been made on the above conjectures, even though they have been around for a long time and have been well-studied. For the special case of graph TSP an upper bound of $\frac{7}5$ is known for the integrality gap \cite{AndrasJens}.  Conjecture \ref{conj:fourthird} has been verified for the so-called \emph{triangle vertices} $x\in S^n$ for which the values are $\frac{1}2$-integer, and the $\frac{1}2$-edges form triangles in the support graph \cite{BC}.  
		The lower bound of $\frac{4}3$ for the integrality gap is provided  by triangle vertices with just two triangles.

		
		A concept first introduced by Carr and Ravi \cite{CR} (for the $2$-edge-connected subgraph problem) is that of a \emph{fundamental class}, which is a class of points $F$ in the subtour elimination polytopes $S^n$, $n\ge 3$ with the following property:  showing that $\rho\,x$ is a convex combination of incidence vectors of tours for all vertices $x \in F$ implies the same holds for \emph{all} vertices of polytopes $S^n$, and thus implies that the integrality gap for the subtour LP is at most $\rho$.  
		
		Two main classes of such vertices have been introduced, one by Carr and Vempala  \cite{CV}, the other by Boyd and Carr \cite{BC}.   In this paper we will focus on the latter one, that is,  we define a \emph{Boyd-Carr point} \cite{BC} to be a point $x \in S^n$ that satisfies the following conditions:
		
		\begin{itemize}
			\item[(i)] The support graph $G_x$ of $x$ is cubic.
			\item[(ii)] In $G_x$, there is exactly one $1$-edge incident to each node.
			\item[(iii)] The fractional edges of $G_x$ form disjoint $4$-cycles.
		\end{itemize}
		
		A \emph{Carr-Vempala point} \cite{CV} is one that satisfies (i), (ii) and instead of (iii), the fractional edges form a Hamiltonian cycle. We use {\em fundamental point} as a common name for points that are either Boyd-Carr or Carr-Vempala points. It has been proved that the Boyd-Carr points \cite{BC} and Carr-Vempala points \cite{CV} each form fundamental classes. The support graph of a fundamental point will be called a \emph{fundamental graph}.  In other words, a fundamental graph is a cubic graph where there exists a perfect matching whose deletion leads to a graph whose components are $4$-cycles, or  a Hamiltonian cycle.  Note that the  $3$-edge-connected instances of each of the two classes of fundamental points also form fundamental classes: for Boyd-Carr points this can be checked from the construction itself \cite{BC}; for Carr-Vempala points this is obvious, since the Hamiltonian cycle of edges $\{e\in E_n: x(e)<1\}$ has at least two edges in every cut, but two such edges alone do not suffice for constraints (\ref{st}) of the subtour LP. 
		
		Despite their significance and simplicity, no effort has been deployed to exploring new integrality gap bounds for these classes, and no improvement on the general $\frac{3}{2}$ upper bound on the integrality gap has been made for them, not even for special cases. A natural first step in this endeavour is to try to improve the general bounds for the special case of $\frac{1}2$-integer Boyd-Carr or Carr-Vempala points.
		
		In this paper we improve the upper bound for the integrality gap from $\frac{3}2$ to $\frac{10}7$ for $\frac{1}2$-integer Boyd-Carr points, and also provide a $\frac{10}7$-approximation algorithm for metric TSP for any cost function for which the subtour LP is optimized by a Boyd-Carr point.  In fact we generalize these results to a superclass of these points.  Replacing the $1$-edges in Boyd-Carr points by $1$-paths of arbitrary length between their two endpoints, we get all the  $\emph{square points}$, that is,  $\frac{1}2$-integer points of $S^n$ for which the $\frac{1}2$-edges form disjoint  $4$-cycles, called \emph{squares} of the support graph. 
		We also show that there exists a subclass of square points that provide instances where the integrality gap is at least $\frac{4}3$. 
	    Thus square points provide new tight examples for the lower bound of the conjectures. 
		
		In the endeavour to find improved upper bounds on the integrality gap we examine the  structure of the support graphs of Boyd-Carr  points. We show that they are all Hamiltonian, an important ingredient of our bounding of their integrality gap. The proof uses a theorem of Kotzig \cite{Kotzig}  on  Eulerian trails with forbidden transitions. An {\em Eulerian trail} in a graph is a closed walk containing each of its edges exactly once. Contrary to tours, it is more than just an edge set, the order of the edges also plays a role.  The connection of tours to Eulerian trails leads  us to delta-matroids and to developing related algorithms, which are discussed in Section \ref{sec:delta}. 
		
		
		In Section~\ref{sub:forbidden} we show a first, basic application of these ideas, where some parts of the difficulties do not occur. We  prove that all edges can be uniformly covered $6/7$ times by tours in the support graphs of both fundamental classes in the case where they are $3$-edge-connected. This is better than the conjectured general bound $8/9$ that would follow for arbitrary $3$-edge-connected cubic graphs from Conjecture~\ref{conj:fourthird} (see \cite{Sebo14}).

		
		
		Another new way of using classical combinatorial optimization for the TSP occurs in Section~\ref{sub:rainbow}, where we use an application of Edmonds' matroid intersection theorem to write the optimum $x$ of the subtour elimination polytope as the convex combination of incidence vectors of ``rainbow'' spanning trees in edge-coloured graphs.  The idea of using spanning trees with special structures to get improved results has recently been used successfully in \cite{G} for graph TSP, and in \cite{JC} and \cite{AA} for a related problem, namely the metric $s-t$ path TSP.  However, note that we obtain and use our trees in a completely different way. 
		
		Our main results concerning the integrality ratio of $\frac12$-integer Boyd-Carr points and square points are proved in Section~\ref{main}. We conclude that section by outlining a potential strategy for using the Carr-Vempala points of  \cite{CV} for proving the $\frac{4}3$ Conjecture.  
		
		Finally, in Section \ref{sec:delta}, we provide polynomial-time optimization algorithms for some of the existence theorems of previous sections, including a $\frac{10}7$-approximation algorithm for metric TSP for any cost function which is optimized by a square point for the subtour LP.  The methods used relate to delta matroids, and their relevance is discussed.
		
		\section{Polyhedral Preliminaries and Other Useful Tools}\label{tools}
		
		In this section we discuss some useful and powerful tools that we  need in the proof of our main result in Section~\ref{main}. We begin with some preliminaries.  
		
		Given a graph $G=(V,E)$ with a node in $V$ labelled $1$, a \emph{$1$-tree} is a subset $F$ of $E$ such that $|F\cap \delta(1)| = 2$ and $F\backslash\delta(1)$ forms a spanning tree on $V\backslash\{1\}$.  The convex hull of the incidence vectors of $1$-trees of $G$, which we will refer to as the \emph {$1$-tree polytope} of the graph $G$, is given by the following \cite[page 262]{theTSPbook}:
		\begin{gather}
		\{x\in \Rset^E:	\mbox{    }x(\delta(1)) = 2, \mbox{    }
		x((E[U])) \leq |U| - 1 \mbox{  for all   } \emptyset \neq U \subseteq V\backslash\{1\},\nonumber\\[1ex]
		0 \leq x_e \leq 1\mbox{   for all   } e \in E \mbox{,     } x(E)=|V|\}.\label{onetreepoly}
		\end{gather}
		It is well-known that the $1$-trees of a connected graph satisfy the basis axioms of a matroid (see \cite[page 262-263]{theTSPbook}).
		
		Given $G=(V,E)$ and $T\subseteq V$, $|T|$ even, a \emph {$T$-join} of $G$ is a set $J\subseteq E$ such that $T$ is the set of odd degree nodes of the graph $(V,J)$.  A cut $C=\delta(S)$ for some $S\subset V$ is called a \emph{$T$-cut} if $|S\cap T|$ is odd. We say that a vector {\em majorates} another if it is coordinatewise greater than or equal to it. The set of all vectors $x$ that majorate some vector $y$ in the convex hull of incidence vectors of $T$-joins of $G$ is given by the following \cite{EdmondsJ73}: 
		\begin{eqnarray}\label{Tjoinpoly}
		\{x\in \Rset^E: x(C) \geq 1 \mbox{ for each } T\mbox{-cut C, } x_e \geq 0 \mbox{ for all }e \in E\}.
		\end{eqnarray} 
		\noindent This is the \emph {$T$-join polyhedron} of the graph $G$. 
		
		The following two results are well-known (see \cite{cunningham}, \cite{SW}, \cite{Wolsey80}), but we include the proofs as they introduce the kind of polyhedral arguments we will use: 
		\begin{lemma}\label{lem:Wolsey} \cite{cunningham} \cite{SW} \cite{Wolsey80}
			If $x\in S^n$, then (i) it is a convex combination of incidence vectors of $1$-trees of $K_n$, and (ii) $x/2$ majorates a convex combination of incidence vectors of $T$-joins of $K_n$ for every $T\subseteq V_n$, $|T|$ even. 
		\end{lemma}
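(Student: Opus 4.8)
The plan is to prove both statements by showing that the relevant scaled point lies inside the explicit polyhedron whose (near-)vertices are the desired objects, so that membership immediately yields the required convex combination. For (i), I would show that $x$ itself lies in the $1$-tree polytope described by (\ref{onetreepoly}); for (ii), I would show that $x/2$ lies in the $T$-join polyhedron described by (\ref{Tjoinpoly}). Since the first polyhedron is the convex hull of incidence vectors of $1$-trees, and the second is by definition the set of vectors majorating a convex combination of incidence vectors of $T$-joins, membership is in each case exactly the assertion to be proved.

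For (i), the workhorse is the elementary counting identity
\[
2\,x(E[U]) + x(\delta(U)) = \sum_{v\in U} x(\delta(v)) = 2|U|,
\]
valid for every $U\subseteq V_n$ because in the degree sum each edge inside $U$ is counted twice and each edge of $\delta(U)$ once, the last equality using the degree equations (\ref{deg}). I would then verify the four families of constraints in (\ref{onetreepoly}) one by one: the equation $x(\delta(1))=2$ is the degree constraint (\ref{deg}) at node $1$; the equation $x(E_n)=|V_n|$ follows by summing (\ref{deg}) over all nodes and halving; the bounds $0\le x_e\le 1$ are (\ref{ul}); and for $\emptyset\neq U\subseteq V_n\setminus\{1\}$ the subtour constraint (\ref{st}) gives $x(\delta(U))\ge 2$ (note $U$ is proper since $1\notin U$), whence the identity yields $x(E[U]) = |U| - \tfrac12 x(\delta(U)) \le |U|-1$. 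This places $x$ in the $1$-tree polytope.

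For (ii), I would simply check that $x/2$ satisfies the two constraint families of (\ref{Tjoinpoly}). Nonnegativity $(x/2)_e\ge 0$ is immediate from (\ref{ul}), and for every $T$-cut $C=\delta(S)$ I need $(x/2)(C)\ge 1$, i.e.\ $x(\delta(S))\ge 2$. The key observation is that a $T$-cut is never trivial: if $S=\emptyset$ or $S=V_n$ then $|S\cap T|\in\{0,|T|\}$ is even (since $|T|$ is even), so such an $S$ does not define a $T$-cut; hence every $T$-cut satisfies $\emptyset\neq S\subsetneq V_n$ and the subtour inequality (\ref{st}) applies directly, giving $x(\delta(S))\ge 2$.

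Both parts are thus verifications of membership, and there is no deep obstacle. The only genuine content is the translation step in (i) between the cut formulation of the subtour constraints and the interior-edge formulation of the $1$-tree polytope via the counting identity above, together with the small parity observation in (ii) that forces every $T$-cut to be proper and nonempty so that (\ref{st}) becomes available.
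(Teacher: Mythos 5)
Your proposal is correct and follows essentially the same route as the paper's own proof: verifying that $x$ satisfies the constraints of the $1$-tree polytope (\ref{onetreepoly}) and that $x/2$ satisfies those of the $T$-join polyhedron (\ref{Tjoinpoly}). You merely spell out the details the paper leaves implicit, namely the counting identity converting cut constraints into the inequalities $x(E[U])\le |U|-1$, and the parity observation that every $T$-cut is a proper nonempty cut so that (\ref{st}) applies.
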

		
		\begin{proof} Constraints (\ref{deg}) and (\ref{st}) of the subtour LP together imply that $x(E_n) = n$ and $x(E_n[S])\leq |S|-1$, for all $\emptyset\neq S\subsetneq V_n$.  Thus $x\in S^n$ satisfies all of the constraints of the $1$-tree polytope of $K_n$ and (i) of the lemma follows.  To check (ii), note that for all $T\subseteq V_n$, $|T|$ even, $x/2$ satisfies the constraints of the $T$-join polyhedron of $K_n$  (in fact $x(C)/2\ge 1$  for every cut $C$),  that is, it majorates a convex combination of incidence vectors of $T$-joins.  \qed
		\end{proof}
		
		\begin{theorem}\label{thm:Wolsey} \cite{cunningham} \cite{SW} \cite{Wolsey80}
			If $x\in S^n$, $\frac32 x \in T^n$. 
		\end{theorem}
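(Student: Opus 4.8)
The plan is to realize $\frac{3}{2}x = x + \frac{1}{2}x$ as a Christofides-type combination in vector form, exploiting that $T^n$ is convex and closed under adding nonnegative vectors ($T^n + \Rset^{E_n}_+ = T^n$). First I would apply Lemma~\ref{lem:Wolsey}(i) to write $x = \sum_{i} \lambda_i \chi^{F_i}$ as a convex combination of incidence vectors of $1$-trees $F_i$ of $K_n$, with $\lambda_i > 0$ and $\sum_i \lambda_i = 1$. Each $1$-tree $F_i$ is a connected, spanning subgraph of $K_n$: deleting the two edges of $F_i$ at node $1$ leaves a spanning tree on $V_n \setminus \{1\}$, and reattaching node $1$ by those two edges keeps the graph connected and spanning.

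For each $i$, let $T_i \subseteq V_n$ be the set of odd-degree nodes of $(V_n, F_i)$; note that $|T_i|$ is even. The key observation is that adding any $T_i$-join $J$ to $F_i$ yields a tour: all node degrees of the multiset $F_i \cup J$ become even (the parities are flipped at exactly the nodes of $T_i$), while connectivity and the spanning property are preserved since we only add edges to the already connected spanning subgraph $F_i$. Hence $\chi^{F_i} + \chi^{J} \in T^n$ for every $T_i$-join $J$, and therefore $\chi^{F_i} + y \in T^n$ for any $y$ that is a convex combination of incidence vectors of $T_i$-joins, by convexity of $T^n$.

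Now I would invoke Lemma~\ref{lem:Wolsey}(ii) with $T = T_i$: the vector $x/2$ majorates some convex combination $y_i$ of incidence vectors of $T_i$-joins. Since $\chi^{F_i} + y_i \in T^n$ and $x/2 \geq y_i$ coordinatewise, the relation $T^n + \Rset^{E_n}_+ = T^n$ gives $\chi^{F_i} + x/2 \in T^n$. Taking the convex combination over $i$ with weights $\lambda_i$ and using convexity of $T^n$ once more yields
\[
\sum_i \lambda_i \left( \chi^{F_i} + \tfrac{1}{2}x \right) = \Big(\sum_i \lambda_i \chi^{F_i}\Big) + \tfrac{1}{2}x = x + \tfrac{1}{2}x = \tfrac{3}{2}x \in T^n,
\]
as required.

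The argument is almost entirely bookkeeping; the one point deserving care is that the parity-correcting set $T_i$ varies with the tree $F_i$, so a single fixed $T$-join decomposition will not suffice. This is precisely why Lemma~\ref{lem:Wolsey}(ii) is stated for every even $T$: it lets the same bound $x/2$ serve as a simultaneous majorant of a convex combination of $T_i$-joins, whichever $T_i$ arises. The other ingredient doing real work is the closure $T^n + \Rset^{E_n}_+ = T^n$, which converts the majorization furnished by Lemma~\ref{lem:Wolsey}(ii) into genuine membership in $T^n$.
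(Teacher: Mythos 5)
Your proof is correct and follows essentially the same route as the paper: decompose $x$ into $1$-trees via Lemma~\ref{lem:Wolsey}(i), add for each tree $F_i$ a $T_{F_i}$-join combination majorated by $x/2$ via Lemma~\ref{lem:Wolsey}(ii), and average. The only cosmetic difference is that you invoke the closure property $T^n + \Rset^{E_n}_+ = T^n$ directly to convert majoration into membership, whereas the paper justifies the same step by noting that raising an edge's multiplicity by $2$ preserves tours --- these are the same fact.
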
		
		
		\begin{proof} By (i) of Lemma~\ref{lem:Wolsey}, $x$ is a convex combination of incidence vectors of $1$-trees of $K_n$. Let $F$ be any $1$-tree of such a convex combination, and $T_F$ be the set of odd degree nodes in the graph $(V_n,F)$. Then by (ii) of Lemma~\ref{lem:Wolsey}, $x/2$ majorates a convex combination of incidence vectors of $T_F$-joins. 
		So $\chi^F + x/2$ majorates a convex combination of incidence vectors of tours, and taking the average with the coefficients of the convex combination of $1$-trees, we get that $x+x/2$ majorates a convex combination of incidence vectors of tours. Since adding $2$ to the multiplicity of any edge in a tour results in another tour, it follows that $\frac32 x \in T^n$. \qed \end{proof}
		
		
		The tools of the following two subsections are new for the TSP and appear to be very useful. 
		\subsection{Eulerian Trails with Forbidden Bitransitions}\label{sub:forbidden} 
		
		\noindent
		Let $G=(V,E)$ be a connected $4$-regular multigraph. For any node $v\in V$, a \emph{bitransition} (at $v$) is a partition of $\delta(v)$ into two pairs of edges.  Clearly every Eulerian trail of $G$ {\em uses} exactly one bitransition at every node, meaning the two disjoint pairs of consecutive edges of the trail at the node.  There are three bitransitions at every node and the theorem below, equivalent to a result of Kotzig \cite{Kotzig}, states that we can forbid one of these and still have an allowed Eulerian trail.  As we will show, this provides Hamiltonian cycles in the support of square points, used in Section~\ref{main} to prove our main result. 
		
		\begin{theorem}\cite{Kotzig}\label{thm:Kotzig}
			Let $G=(V,E)$ be a 4-regular connected multigraph with a forbidden bitransition 
			for every $v\in V$. Then $G$ has an Eulerian trail not using the forbidden bitransition of any node. 
		\end{theorem}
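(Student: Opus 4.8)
The plan is to start from an arbitrary Eulerian trail of $G$ and to repair it node by node until it avoids every forbidden bitransition, maintaining throughout the invariant that the current object is a \emph{single} closed Eulerian trail. Such a starting trail exists because $G$ is connected with all degrees equal to $4$, hence even. Since $G$ is $4$-regular, any Eulerian trail visits each node $v$ exactly twice, and its two passages through $v$ pair up the four edges of $\delta(v)$ into the bitransition \emph{used} at $v$. Call a node \emph{bad} if the current trail uses its forbidden bitransition. I would measure progress by the number of bad nodes and show that a single local modification strictly decreases this number without ever increasing it.

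The key step is a swapping lemma at a bad node $v$. Fix the trail and delete its two passages through $v$; since $v$ is visited exactly twice, this breaks the single trail into exactly two arcs, each of which has two of the half-edges of $\delta(v)$ as its endpoints. The forbidden bitransition is precisely the pairing that reconnects these arcs as they currently are, and the two \emph{allowed} bitransitions are the other two pairings of $\delta(v)$. A short case analysis on the cyclic order in which the four half-edges of $\delta(v)$ occur along the trail shows that, of these two allowed pairings, exactly one re-joins the two arcs into a single closed trail (reversing the traversal of one arc), while the other closes each arc into a circuit of its own. I would switch $v$ to the former, connectivity-preserving allowed bitransition.

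The crucial point that makes the induction go through is that this operation changes \emph{only} the bitransition at $v$: every other node lies in the interior of one of the two arcs, and although one arc may now be traversed in the opposite direction, the unordered pair of edges consecutive to each internal node is unchanged, so its bitransition---and in particular whether it is bad---is untouched. Hence $v$ becomes good, no new bad node is created, and the object remains a single Eulerian trail. The number of bad nodes therefore strictly decreases at each step, and after finitely many steps none remains, yielding the desired trail. The main obstacle is the swapping lemma itself, namely verifying carefully that exactly one of the two allowed bitransitions keeps the trail connected; this is where $4$-regularity is essential, since it guarantees that $v$ is visited exactly twice and hence that deleting its passages produces precisely two arcs to reconnect.
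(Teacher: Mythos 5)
Your proof is correct, but note that the paper itself gives no proof of Theorem~\ref{thm:Kotzig}: it is quoted from Kotzig's 1968 paper, so there is no ``paper proof'' to match against. Your argument is the classical exchange proof of this result, and it is complete: the decomposition into exactly two arcs after deleting the two passages at a bad node $v$, the observation that of the three pairings of the four half-edges one reconnects the trail as it was, one closes each arc into its own circuit, and the remaining (crossing) one re-glues the two arcs into a single closed trail with one arc reversed, and the key invariant that reversal does not change the unordered pair of edges consecutive at any internal node --- so only the status of $v$ changes. This gives a strict decrease in the number of bad nodes and hence termination in at most $|V|$ swaps. The only cosmetic gap is that for multigraphs with loops the phrase ``four edges of $\delta(v)$'' should be ``four half-edges at $v$'', which your own wording elsewhere already suggests. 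Two points of comparison with the paper are worth making. First, your proof is effective: each swap is a local operation, so you obtain a polynomial-time algorithm for finding the compatible Eulerian trail, which is exactly what the paper needs later (it instead routes algorithmic questions through delta-matroids in Section~\ref{sec:delta}). Second, your swapping lemma is essentially the same exchange phenomenon that the paper proves directly for square graphs in Theorem~\ref{thm:ham} (swapping the two perfect matchings of a square either preserves connectivity or can be repaired by one further swap); indeed the paper remarks that the square delta-matroid coincides with Bouchet's delta-matroid of transitions of Eulerian trails, so your argument can be seen as the transition-system counterpart of that proof, applied to prove Theorem~\ref{thm:Kotzig} rather than assume it.
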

	A \emph{square graph} is defined as a pair $(G,M)$ where $G=(V,E)$ is a cubic $2$-edge-connected graph, and $M$ is a perfect matching of $G$ such that the edges  $E\setminus M$ form squares.  We associate Boyd-Carr points to square graphs, where $M$ is defined to be the set of $1$-edges.
		
		\begin{lemma}\label{lem:Ham} A square graph $(G,M)$ has a Hamiltonian cycle containing $M$. 
		\end{lemma}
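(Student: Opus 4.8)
The plan is to reduce the statement to Kotzig's theorem (Theorem~\ref{thm:Kotzig}) by contracting each square of $G$ to a single node. First I would form the $4$-regular multigraph $H$ whose nodes are the squares of $(G,M)$ and whose edges are exactly the matching edges of $M$: each square has four vertices, each incident to exactly one edge of $M$ (since $M$ is a perfect matching and the two remaining edges at every vertex lie on its square), so every node of $H$ has degree $4$, a matching edge joining two vertices of the same square becoming a loop. Because $G$ is connected, so is $H$; hence $H$ is a $4$-regular connected multigraph as required by Theorem~\ref{thm:Kotzig}.

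Next I would set up the dictionary between Hamiltonian cycles of $G$ containing $M$ and bitransition-restricted Eulerian trails of $H$. Fix a square with cyclic vertices $v_1,v_2,v_3,v_4$ and square edges $v_1v_2, v_2v_3, v_3v_4, v_4v_1$, and let $h_1,h_2,h_3,h_4$ be the ends at this node in $H$ of the matching edges meeting $v_1,\dots,v_4$. A Hamiltonian cycle $C \supseteq M$ must give every vertex degree $2$, and since each vertex already carries its (used) matching edge, $C$ uses exactly one more edge there, necessarily a square edge; hence within each square $C$ uses a perfect matching of the four square edges. The two such perfect matchings, $\{v_1v_2, v_3v_4\}$ and $\{v_2v_3, v_4v_1\}$, realize precisely the bitransitions $\{h_1h_2, h_3h_4\}$ and $\{h_2h_3, h_4h_1\}$ at the node, whereas the third bitransition $\{h_1h_3, h_2h_4\}$ would require the chords $v_1v_3, v_2v_4$, which are not square edges. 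I would therefore declare $\{h_1h_3, h_2h_4\}$ to be the forbidden bitransition at every node and invoke Theorem~\ref{thm:Kotzig} to obtain an Eulerian trail $W$ of $H$ avoiding all forbidden bitransitions.

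Finally I would lift $W$ back to $G$. Since $W$ uses at each node one of the two admissible bitransitions, I replace each node-passage by the corresponding pair of square edges; together with all of $M$ (each edge of which $W$ traverses exactly once) this produces a subgraph $F$ of $G$ in which every vertex has degree $2$, namely its matching edge and exactly one square edge. Thus $F$ is a disjoint union of spanning cycles, and its cycles are in bijection with the closed walks into which the transition system induced by $W$ decomposes the edges of $H$ (every cycle of $F$ must use a matching edge, as distinct squares are joined only through $M$). Because $W$ is a single Eulerian trail, this decomposition is a single closed walk, so $F$ is a single spanning cycle, i.e. a Hamiltonian cycle of $G$ containing $M$.

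The main obstacle I anticipate is this last step: guaranteeing that the lifted object is one cycle rather than several. The reduction only forbids the ``diagonal'' pairing at each node and says nothing a priori about global connectivity; the point that rescues the argument is that an Eulerian trail is a \emph{single} closed walk, a property that the contraction--expansion correspondence transfers to a single cycle in $G$. Care is also needed when a matching edge is a chord of its own square (a loop in $H$): there the forbidden bitransition is exactly the one pairing the two ends of the loop, so the two admissible bitransitions still correspond to genuine square edges and the argument goes through unchanged.
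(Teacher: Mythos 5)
Your proposal follows the same strategy as the paper's proof: contract each square to a node so that the matching edges form a $4$-regular connected multigraph, forbid at each contracted node the ``diagonal'' bitransition, invoke Theorem~\ref{thm:Kotzig}, and lift the compatible Eulerian trail back to $G$, using the fact that a single closed walk lifts to a single spanning cycle. Your dictionary between the two perfect matchings of a square and the two admissible bitransitions, and your connectivity argument for the lift, are correct and match the paper.

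The one place where you diverge is exactly the place where the paper is more careful: squares having a chord in $M$, which become loops after contraction. The paper never contracts such a square; it first eliminates each one by a reduction (delete $u_1,u_2,u_3,u_4$, add the edge $ab$ to the matching, obtain a smaller square graph, and translate a Hamiltonian cycle of the smaller graph back by reinserting $au_1,u_1u_2,u_2u_4,u_4u_3,u_3b$), precisely so that Kotzig's theorem is only ever applied to a \emph{loopless} $4$-regular multigraph. You instead keep the loops and assert that ``the argument goes through unchanged''. This is the one under-justified step, and it is a genuine gap as written: with a loop present, Theorem~\ref{thm:Kotzig} cannot be applied verbatim, because at a loop node the three bitransitions do not play symmetric roles --- no Eulerian trail of a graph with more than one edge can pair the two ends of a loop with each other, since that pairing splits the loop off as a separate closed walk. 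So the theorem one actually needs is an extension of Kotzig's theorem to multigraphs with loops (where the forbidden bitransition at a loop node is the unusable one, i.e.\ the constraint there is vacuous). Your observation that the forbidden bitransition at a chord-square is exactly this impossible pairing is the right key fact, and the extension is indeed true; but it still requires a proof --- for instance, suppress each loop together with its node, replacing the two remaining incident edges by a single edge, apply Kotzig to the smaller graph and re-insert the loop into the trail, arguing by induction since suppression can itself create new loops --- or else the paper's reduction. As it stands, your final paragraph names the difficulty but does not close it.
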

		
		\begin{proof} Let $G=(V,E)$. We assume $G$ has at least two squares as the lemma is trivially true otherwise.
			
					Suppose first that $G$ has a square $u_1u_2,u_2u_3,$ $u_3u_4,u_4u_1$ in $E\setminus M$, with a chord, say $u_2u_4$, in $M$, that is, exactly the two edges  $au_1$ and $u_3b$ $(a,b\in V)$ are leaving the set $\{u_1,u_2,u_3,u_4\}\subseteq V$.   In this case we can delete $u_1,u_2,u_3,u_4$ from the graph and add the edge $ab$ to $G$ to form a new square graph $(\hat{G},\hat{M})$ with $\hat{M} = M - u_2u_4 + ab$. From a Hamiltonian cycle of this reduced graph containing $\hat{M}$ we obtain a Hamiltonian cycle of $G$ containing $M$ by deleting $ab$, and adding $au_1,u_1u_2,u_2u_4,u_4u_3,u_3b$. We can thus suppose  that $G$ has no such square.  
			
		Contracting all squares of $G$, we obtain a $4$-regular connected multigraph $G^\prime=(V^\prime,E^\prime)$ whose edges are precisely $M$ and whose nodes are precisely the squares of $G\setminus M$. To each contracted square $C$ 
		we associate the forbidden bitransition consisting of the pairs of  edges of $M$ incident with the diagonally opposite nodes of $C$ in $G$, as shown in Figure \ref{fig:shrinksquare}.  By Theorem \ref{thm:Kotzig}, there is an Eulerian trail $K$ of $G^\prime$ that does not use these forbidden bitransitions.  The two pairs of consecutive edges in $K$ at each node in $G^\prime$ can then be completed by a perfect matching of the corresponding square in $G$, forming the desired Hamiltonian cycle.\qed	
		\end{proof}

		\begin{figure}
			\centering
			\includegraphics[scale=.60]{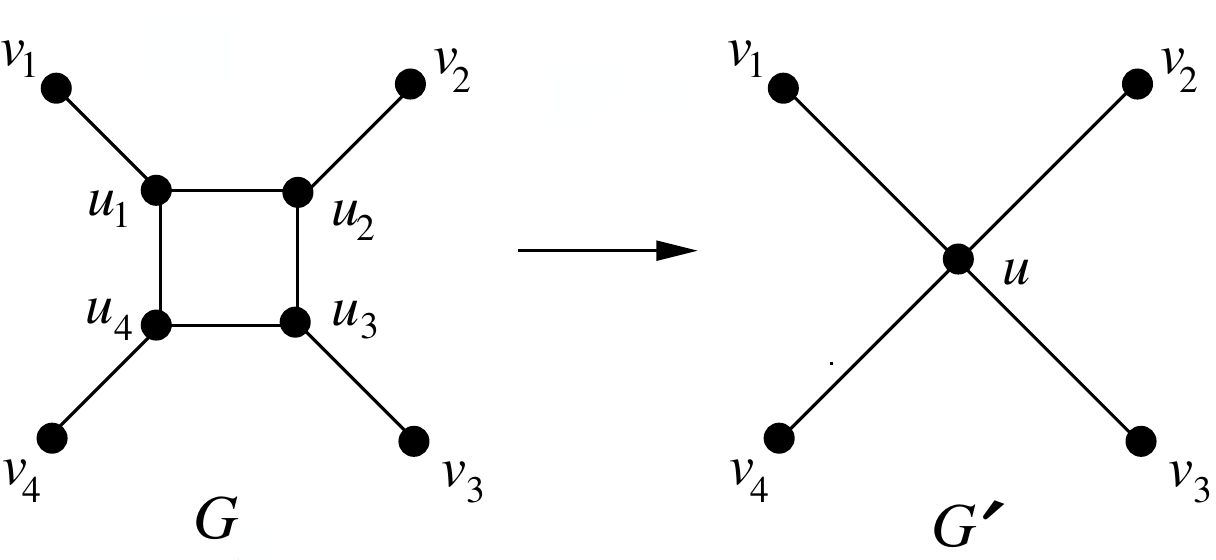}
			\caption{Shrinking a square in $G$ to  node $u$; forbidden: $\{(uv_1,uv_3),(uv_2,uv_4)\}$.}
			\label{fig:shrinksquare}
		\end{figure}

		The exhibited connection of Eulerian graphs with forbidden bitransitions sends us to a link on delta-matroids \cite{BC95} with well-known optimization properties. We explore this link in Section~\ref{sec:delta}, where we provide a direct self-contained algorithmic proof (with a polynomial-time, greedy algorithm) for a weighted generalization of Lemma \ref{lem:Ham}. We content ourselves in this section by providing a simple, first application of Lemma~\ref{lem:Ham} which shows a basic idea we will use in the proof of our main result in Section~\ref{main}, without the additional difficulty of the more refined application.     
		
		Given a graph $G=(V,E)$ and a value $\lambda$, the \emph{everywhere $\lambda$ vector for $G$} is the vector $y \in \Rset^{E_{|V|}}$ for which $y_e=\lambda$ for all edges $e\in E$ and $y_e=0$ for all the other edges in the complete graph $K_{|V|}$.  In Theorem~\ref{thm:everywherevector} below we show that for any cubic $3$-edge-connected graph with a Hamiltonian cycle, the everywhere $6/7$ vector is in $T^n$.  Since fundamental graphs are Hamiltonian (by Lemma~\ref{lem:Ham} for Boyd-Carr, and by definition for Carr-Vempala), the theorem applies to their $3$-edge-connected instances. Both classes of such $3$-edge-connected graphs are also fundamental classes as was noted earlier (see the remark after the definition).  
		
		\begin{theorem}\label{thm:everywherevector} If $G=(V,E)$ is a cubic, $3$-edge-connected Hamiltonian graph, then the everywhere $6/7$ vector for $G$ is in $T^n$.   
		\end{theorem}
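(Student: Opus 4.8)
The plan is to exploit the reformulation of membership in $T^n$ afforded by the fact, noted in the introduction, that $T^n$ is closed under adding nonnegative vectors ($T^n + \Rset^{E_n}_+ = T^n$). Writing $E = E(G)$ and letting $\mathbf{1}_E$ be its incidence vector, the everywhere $6/7$ vector is $\frac67\mathbf{1}_E$, and since $T^n$ is upward closed it lies in $T^n$ precisely when some convex combination of incidence vectors of tours of $G$ is coordinatewise at most $\frac67\mathbf{1}_E$. Equivalently, by the support-function (LP-duality) description of the upward-closed polyhedron $T^n$, it suffices to show that for every cost vector $c\in\Rset^E_{\ge 0}$ there is a tour $T\subseteq G$ with $c(T)\le \frac67\,c(E)$. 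I would prove the statement in this second, purely combinatorial form, while keeping the first in mind, since the construction will in fact produce an explicit convex combination.

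The starting point is that a Hamiltonian cubic graph is $3$-edge-colourable: the Hamiltonian cycle $H$ has even length, so its edges split into two alternating perfect matchings $A$ and $B$, and together with the complementary perfect matching $M:=E\setminus H$ these give a partition $E=A\sqcup B\sqcup M$ into three perfect matchings. The three associated $2$-factors $H=A\cup B$, $A\cup M$ and $B\cup M$ are such that every edge belongs to exactly two of them; each is a disjoint union of (even) cycles, and $H$ is connected by hypothesis. If all three were connected --- that is, Hamiltonian cycles --- then averaging their incidence vectors with weight $\frac13$ each would already give the everywhere $2/3$ vector, which is stronger than what is claimed. The whole difficulty is therefore concentrated in the $2$-factors $A\cup M$ and $B\cup M$ that may fail to be connected.

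To handle these I would turn each disconnected $2$-factor into a genuine tour by merging its cycle components with edges of the third colour class (using $B$-edges to merge the components of $A\cup M$, and $A$-edges for $B\cup M$) via $2$-opt style exchanges, which preserve the property of being a $2$-factor and hence yield tours once the pieces are joined. Here $3$-edge-connectivity is essential: every component of such a $2$-factor sends at least three edges of the third colour to the rest of the graph, which guarantees the exchanges can be carried out and the components linked into a single cycle. The plan is to coordinate these exchanges across the two factors so that an edge spent to connect one factor is released (dropped) from the other, keeping the coverage of most edges at $\frac23$; the slack $\frac67-\frac23=\frac{4}{21}$ is precisely what is available to pay for the edges where this compensation cannot be made exact.

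The main obstacle, as always in this circle of ideas, is the connectivity/parity bookkeeping of the merging step: a connector edge added to one $2$-factor already appears in $H$, so unless it is simultaneously removed from the other factor its coverage jumps to $1>\frac67$, and the admissible exchanges are further constrained by the parities forced on each cycle component. Bounding this over-coverage uniformly by $\frac67$ --- controlling how many components a $2$-factor can have and how cheaply they can be reconnected under these parity constraints --- is the crux, and is exactly the phenomenon that the Eulerian-trail and delta-matroid tools of Sections~\ref{sub:forbidden} and~\ref{sec:delta} are designed to control; in this first, basic instance the Hamiltonicity of $H$ together with $3$-edge-connectivity should make a direct argument possible without the full machinery.
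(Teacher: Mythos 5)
Your reduction to the statement ``for every $c\in\Rset^E_{\ge 0}$ there is a tour of cost at most $\frac67 c(E)$'' is a correct use of the upward-closedness of $T^n$, and the decomposition $E=A\sqcup B\sqcup M$ with the observation that the three $2$-factors $A\cup B$, $A\cup M$, $B\cup M$ cover each edge exactly twice is sound. But the proof stops exactly where the difficulty begins, and you say so yourself: the merging of the components of $A\cup M$ and $B\cup M$ with controlled over-coverage is asserted, not proved. Worse, the compensation mechanism you describe is not even well-defined: ``releasing (dropping)'' an edge from the other $2$-factor leaves two vertices of odd degree, so that factor ceases to be a tour; exchanges must remove and add edges in pairs, every edge you add to $A\cup M$ is a $B$-edge (hence already in $H$ and in $B\cup M$), and its coverage jumps to $1$ unless a matching exchange is performed in $B\cup M$ --- whose added edges in turn need compensation in $A\cup M$, which is circular. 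Since $A\cup M$ can have $\Theta(n)$ components, these per-component corrections accumulate, and bounding their total by the global slack $\frac67-\frac23$ is precisely the open-problem-level difficulty the paper alludes to: for general cubic $3$-edge-connected graphs even the everywhere $\frac89$ vector is only conjectured to lie in $T^n$, and the best known uniform cover ($\frac{18}{19}$, Haddadan--Newman--Ravi) requires the heavy Kaiser--\v{S}krekovski / Boyd--Iwata--Takazawa machinery. Hamiltonicity does not obviously rescue your scheme, because the problematic factors $A\cup M$ and $B\cup M$ are exactly the ones Hamiltonicity says nothing about.

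The paper sidesteps all of this with a four-line polyhedral argument you should compare against. Let $M:=E\setminus H$ and define the auxiliary point $x$ by $x_e=1$ on $M$, $x_e=\frac12$ on $H$, $x_e=0$ elsewhere. Then $x\in S^{|V|}$: degrees are $1+\frac12+\frac12=2$, and any cut either contains an $M$-edge (giving $x(\delta(S))\ge 1+\frac12+\frac12=2$ since it also contains at least two $H$-edges) or consists solely of $H$-edges, in which case $3$-edge-connectivity and the evenness of $|H\cap\delta(S)|$ force at least four $H$-edges --- this is the only place $3$-edge-connectivity is used. By the known $\frac32$ bound (Theorem~\ref{thm:Wolsey}), $\frac32 x\in T^n$, and the convex combination $\frac37\chi^H+\frac47\bigl(\frac32 x\bigr)$ equals $\frac67$ on every edge of $E$. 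So the paper never repairs a disconnected $2$-factor; it lets the general Wolsey/Christofides-type theorem absorb that difficulty. If you want to salvage your approach, you would need to supply the merging argument in full, and the cited literature suggests this is genuinely hard rather than a routine omission.
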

		
		\begin{proof}
			Let $H$ be a Hamiltonian cycle of G, and let $M:=E\setminus H$ be the perfect matching complementary to $H$.  Note that $\chi^H$ is the incidence vector of a tour, and we will use it in the convex combination for the everywhere $6/7$ vector.  It is a good choice in that it is a tour which contains the fewest edges possible, however it will need to be balanced with tours that do not use the edges in $H$ very often in order to achieve our goal.  To this end we consider the point $x \in\Rset^{E_{|V|}}$ defined by $x_e=1$ if $e\in M$, $x_e=1/2$ if $e\in H$ and $x_e=0$ otherwise.  It is easily seen that $x$ is in the subtour elimination polytope $S^{|V|}$, thus by Theorem~\ref {thm:Wolsey},  $\frac32 x\in T^n$. 
			
			Now take the convex combination of tours $t:=\frac37\chi^H + \frac47 \frac32 x$. Then for edges $e\in M$ we have $t_e = 0 + \frac47\frac32=\frac67$. For edges $e\in H$ we have $t_e =\frac37 + \frac47\frac32\frac12=\frac67$, and $x_e = 0$ for all edges $e$ not in $G$, finishing the proof.   \qed
		\end{proof}
		
	    Replacing  Hamiltonian cycles by other, relatively small tours or convex combinations of tours in the proof of Theorem~\ref{thm:everywherevector}, one may  get similar results weakening the Hamiltonicity condition. Such results are particularly interesting for general, cubic $3$-edge-connected graphs.  For such graphs  the everywhere $1$ vector is in $T^n$, as noticed in \cite{Sebo14},  where it is then asked  whether the everywhere $\frac89$ vector belongs to $T^n$. These are the values one gets from Theorem~\ref{thm:Wolsey} and Conjecture~\ref{conj:fourthird} respectively, applied to the everywhere $\frac23$ vector,  feasible for $S^n$. Note that the analogous problem for the $s-t$ path TSP has been solved \cite{AA}. 
	     
	      The above possibility has been  explored by Haddadan, Newman and Ravi \cite{HNR}, who proved that the everywhere $18/19$ vector is in $T^n$, getting this constant below $1$ for the first time, and with a polynomial-time algorithm.    They replaced the Hamiltonian cycle in the proof of Theorem~\ref{thm:everywherevector} by a convex combination of tours proved by Kaiser and \v{S}krekovski \cite{KS} and by Boyd, Iwata, Takazawa \cite{BIT} algorithmically. Using this convex combination they also deduce better bounds with simple proofs for node-weighted graphs. Let us do the same for node-weighted Hamiltonian graphs. 
		
		In a {\em node-weighted TSP} every node $v$ of a given graph $G=(V,E)$ is given a weight $f_v \in \Rset_+$, and the cost $c_{uv}$ of an edge $uv \in K_{|V|}$ is $f_u + f_v$ if $uv \in E$, and the cost of a shortest path in $G$ otherwise. Note that $c$ is metric.  
		
		\begin{theorem}
			Let $G=(V,E)$ be a node-weighted cubic $3$-edge-connected graph for which the everywhere $\lambda$ vector $y$ is in $T^n$.  Then $OPT(c) \le \frac32 \lambda LP(c)$.
		\end{theorem}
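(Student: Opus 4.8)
The plan is to sandwich the two quantities of interest: bound $OPT(c)$ from above using the hypothesis $y\in T^n$, and bound $LP(c)$ from below using the node weights together with the degree constraints, so that the ratio of the two bounds comes out to exactly $\frac32\lambda$.

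First I would establish the upper bound on $OPT(c)$. Since the everywhere $\lambda$ vector $y$ lies in $T^n$, it is a convex combination of incidence vectors of tours, so at least one tour $t$ occurring in that combination satisfies $c\chi^t \le cy$; as $c$ is metric, this tour can be shortcut (in the sense defined before Theorem~\ref{thm:vectorform}) to a Hamiltonian cycle of no greater cost, giving $OPT(c)\le cy$. It then remains to evaluate $cy$. Because $y$ is supported on $E$ with value $\lambda$ there, and $c_{uv}=f_u+f_v$ for every $uv\in E$, I would compute
\[
cy \;=\; \lambda\sum_{uv\in E}(f_u+f_v) \;=\; \lambda\sum_{v\in V} f_v\,\deg_G(v) \;=\; 3\lambda\sum_{v\in V} f_v,
\]
the last equality using that $G$ is cubic. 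Hence $OPT(c)\le 3\lambda\sum_v f_v$.

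Next I would establish the lower bound $LP(c)\ge 2\sum_v f_v$. The key observation is that $c_{uv}\ge f_u+f_v$ for \emph{every} pair $uv$ of the complete graph, not only for edges of $G$: for a non-edge, $c_{uv}$ is the length of a shortest $u$--$v$ path $u=w_0,w_1,\dots,w_k=v$ in $G$, whose cost telescopes to $f_u+f_v+2\sum_{i=1}^{k-1} f_{w_i}\ge f_u+f_v$ since the weights are non-negative. Consequently, for any $x\in S^n$,
\[
cx \;\ge\; \sum_{uv}(f_u+f_v)\,x_{uv} \;=\; \sum_{v\in V} f_v\,x(\delta(v)) \;=\; 2\sum_{v\in V} f_v,
\]
where the middle equality regroups the sum by nodes and the last step uses the degree constraints $x(\delta(v))=2$. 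Taking the minimum over $x\in S^n$ gives $LP(c)\ge 2\sum_v f_v$.

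Combining the two bounds yields $OPT(c)\le 3\lambda\sum_v f_v = \frac32\lambda\cdot 2\sum_v f_v \le \frac32\lambda\,LP(c)$, as desired. I do not expect any genuine obstacle here; the only point requiring care is the lower bound $c_{uv}\ge f_u+f_v$ for non-edges, where one must note that the intermediate node weights along a shortest path contribute non-negatively and so may be dropped. It is worth remarking that the $3$-edge-connectivity hypothesis is not actually invoked in this argument—it serves only to make the premise ``$y\in T^n$'' a natural one, exactly as in Theorem~\ref{thm:everywherevector}—whereas the cubic assumption is essential, entering through the evaluation of $cy$.
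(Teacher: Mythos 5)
Your proposal is correct and follows essentially the same route as the paper's proof: bound $OPT(c)$ above by $\lambda c(E)=3\lambda\sum_{v\in V} f_v$ via a cheapest tour in the convex combination witnessing $y\in T^n$ (then shortcut, using metricity), and bound $LP(c)$ below by $2\sum_{v\in V} f_v$ via the degree constraints. The only (harmless) deviation is that the paper asserts the equality $LP(c)=2\sum_{v\in V} f_v$, using feasibility of the everywhere $\frac23$ vector for the upper-bound direction, whereas you correctly note that only the lower bound $LP(c)\ge 2\sum_{v\in V} f_v$ is needed --- which is also why $3$-edge-connectivity plays no role in your version of the argument.
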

	
	\begin{proof}
		Since $G$ is cubic we have $c(E)=3 \sum (f_v: v \in V)$.  Since $y\in T^n$ there is a tour $J$ such that 
		\begin{equation} \label{cheapJ}
		c(J) \le \lambda c(E) = 3\lambda \sum (f_v: v \in V).
	\end{equation} 
		As observed in \cite{HNR},  it follows from  constraints (\ref{deg}) of the subtour LP and the fact that the everywhere $\frac23$ vector on $E(G)$ is feasible for $S^{|V|}$ that $LP(c) = 2 \sum (f_v: v\in V)$.   
		Together with (\ref{cheapJ}) 
		and the fact that $OPT(c) \le c(J)$ for metric costs completes the proof.\qed
\end{proof}

\begin{corollary}
If $G=(V,E)$ is a node-weighted cubic, $3$-edge-connected Hamiltonian graph, or in particular a $3$-edge-connected fundamental graph, then $OPT(c) \le \frac97 LP(c)$.
\end{corollary}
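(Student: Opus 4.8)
The plan is to chain the two results that immediately precede this corollary, so that no genuinely new work is required. First I would record that the graph $G$ satisfies the hypotheses of Theorem~\ref{thm:everywherevector}: it is cubic, $3$-edge-connected, and Hamiltonian. That theorem then tells us that the everywhere $6/7$ vector for $G$ belongs to $T^n$. The point of this step is purely to fit the situation into the template of the preceding node-weighted theorem, whose single hypothesis is precisely that the everywhere $\lambda$ vector lies in $T^n$, here with $\lambda = 6/7$.

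Second, I would invoke that preceding theorem with $\lambda = 6/7$. It yields $OPT(c) \le \tfrac32 \lambda \, LP(c)$, and substituting $\lambda = 6/7$ gives $OPT(c) \le \tfrac32 \cdot \tfrac67 \, LP(c) = \tfrac9 7 \, LP(c)$, which is exactly the claimed bound. No further estimation is needed, since the node-weighted theorem already absorbs the relation $LP(c) = 2\sum_{v \in V} f_v$ and the metric inequality $OPT(c) \le c(J)$ into its conclusion.

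For the ``in particular'' clause, I would observe that every $3$-edge-connected fundamental graph is Hamiltonian, and hence is covered by the first part: for Boyd--Carr points this is the content of Lemma~\ref{lem:Ham} (a square graph has a Hamiltonian cycle containing its matching $M$), while for Carr--Vempala points Hamiltonicity holds by definition, as the fractional edges already form a Hamiltonian cycle. Thus fundamental graphs inherit the $\tfrac97$ guarantee with no extra argument.

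I do not expect any serious obstacle here: all of the difficulty has been discharged earlier, namely in establishing Hamiltonicity of square graphs (Lemma~\ref{lem:Ham}, via Kotzig's theorem) and in producing the everywhere $6/7$ vector as a convex combination of tours (Theorem~\ref{thm:everywherevector}). The only point requiring a moment's care is bookkeeping the constant $\tfrac32 \cdot \tfrac67 = \tfrac97$ and confirming that the fundamental-graph case really does reduce to the Hamiltonian case; both are immediate.
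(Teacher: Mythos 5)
Your proof is correct and is precisely the chain the paper intends: Theorem~\ref{thm:everywherevector} supplies the everywhere $6/7$ vector in $T^n$, the node-weighted theorem with $\lambda=6/7$ gives $\frac32\cdot\frac67=\frac97$, and the ``in particular'' clause follows because $3$-edge-connected fundamental graphs are Hamiltonian (Lemma~\ref{lem:Ham} for Boyd--Carr, by definition for Carr--Vempala), exactly as the paper notes before stating Theorem~\ref{thm:everywherevector}. No gaps; the bookkeeping is complete.
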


		\subsection{Rainbow $1$-trees}\label{sub:rainbow}
		We now use   matroid intersection   to prove that not only is $x$ is in the convex hull of incidence vectors of $1$-trees, but we can also require that these $1$-trees satisfy some additional useful properties. We will use this in the proof of our main result in Section \ref{main}. 
		
		Given a graph $G=(V,E)$, let every edge of $G$ be given a colour. We call a $1$-tree $F$ of $G$ a \emph{rainbow $1$-tree} if every edge of $F$ has a different colour.  Rainbow  trees are discussed by Broersma and Li in \cite{BL}, where they note they are the common independent sets of two matroids, a fact  combined here with a polyhedral argument to obtain the following theorem:  
		\begin{theorem}\label{thm:michel}
	Let $x\in S^n$ be $\frac{1}2$-integer, and let $\Pscr$ be any partition of the $\frac{1}2$-edges into pairs.  Then  $x$ is in the convex hull of incidence vectors of $1$-trees that each contain exactly one edge from each pair in $\Pscr$.
\end{theorem}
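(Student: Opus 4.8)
The plan is to exhibit the required $1$-trees as the common bases of two matroids defined on the edge set of the support graph $G_x=(V_n,E_x)$, and then to read off the conclusion from the integrality of the matroid intersection polytope. First I would record that, since $x\in S^n$, constraint (\ref{st}) forces every cut to carry at least two support edges, so $G_x$ is $2$-edge-connected and in particular connected; thus its $1$-trees (with respect to any fixed root) form the bases of a matroid $M_1$. By Lemma~\ref{lem:Wolsey}(i) $x$ is a convex combination of $1$-trees of $K_n$, and as $x$ vanishes off $E_x$ these $1$-trees use only support edges, so $x$ lies in the base polytope of $M_1$, namely the $1$-tree polytope (\ref{onetreepoly}). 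Writing $E_1=\{e:x_e=1\}$ and $E_{1/2}=\{e:x_e=\tfrac12\}$, with $k:=|E_1|$ and $p:=|\Pscr|=|E_{1/2}|/2$, I would then define the partition matroid $M_2$ on $E_x$ whose blocks are the pairs of $\Pscr$ (each of capacity $1$) together with one singleton block of capacity $1$ for each $1$-edge. The point $x$ lies in the base polytope of $M_2$ as well, since $x_e+x_{e'}=\tfrac12+\tfrac12=1$ for every pair $\{e,e'\}\in\Pscr$ and $x_e=1$ for every $1$-edge.

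The hinge of the argument is a counting identity: summing $x$ over its support gives $x(E_x)=k+\tfrac12|E_{1/2}|=k+p$, while (\ref{deg}) forces $x(E_x)=x(E_n)=n$, so $k+p=n$. Hence both matroids have rank $n$: $M_1$ because every $1$-tree on $n$ nodes has exactly $n$ edges, and $M_2$ because its rank equals its number of blocks, $p+k=n$. This rank coincidence is precisely what upgrades "at most one edge per pair'' to "exactly one edge per pair'': a common base $F$ is a $1$-tree (so $|F|=n$) that is independent in $M_2$ (so it contains at most one edge of each pair, whence at most $p$ of the $\tfrac12$-edges and at most $k$ of the $1$-edges); since $F$ has $n=p+k$ edges it must contain all $k$ $1$-edges and exactly one edge of each pair. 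Thus the common bases of $M_1$ and $M_2$ are exactly the $1$-trees containing one edge from each pair of $\Pscr$.

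Finally I would conclude that $x$ is a convex combination of these common bases. By Edmonds' matroid intersection theorem the intersection $P(M_1)\cap P(M_2)$ of the two independence polytopes is integral and equals the convex hull of the common independent sets. Intersecting this integral polytope with the supporting hyperplane $\{y:y(E_x)=n\}$ — valid since any common independent set has at most $\min(r_1,r_2)=n$ elements — produces a face, hence again an integral polytope, whose vertices are the common independent sets of size $n$, i.e. the common bases. Therefore the intersection of the two base polytopes equals the convex hull of the common bases, and since $x$ lies in both base polytopes it is a convex combination of incidence vectors of $1$-trees each containing exactly one edge from each pair of $\Pscr$, as required. I expect the only real obstacle to be expository rather than mathematical: one must set up $M_2$ so that $x$ sits in its base polytope and so that its rank matches that of $M_1$, and then pass cleanly from the standard integrality of the matroid intersection independence polytope to the integrality of the intersection of the two base polytopes. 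The identity $k+p=n$, immediate from half-integrality and (\ref{deg}), is what makes everything line up.
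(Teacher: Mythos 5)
Your proposal is correct and follows essentially the same route as the paper: both arguments place $x$ in the base polytope of the $1$-tree matroid (via Lemma~\ref{lem:Wolsey}) and of the partition matroid whose classes are the pairs of $\Pscr$ together with singletons for the $1$-edges, and both conclude by matroid intersection. The only difference is presentational: where the paper cites \cite[Corollary 41.12d]{Schrijver} for the fact that the intersection of the two base polytopes is the convex hull of common bases, you re-derive that corollary from the integrality of Edmonds' matroid intersection polytope (using the rank identity $k+p=n$), which is a perfectly valid, self-contained substitute.
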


\begin{proof}	
	Let $G_x = (V_n, E_x)$ be the support graph of $x$.  Consider the partition matroid (cf. \cite{Schrijver}) defined on $E_x$  by the partition $\Pscr \cup \{\{e\} : e\in E_x, \hbox{ $e$ is a $1$-edge} \}$. By Lemma~\ref{lem:Wolsey}, $x$ is in the convex hull of incidence vectors of $1$-trees in $E_x$; since   $x(Q)=1$ for every class $Q$ of the defined partition matroid, it is also in the convex hull of its bases.  Thus by \cite[Corollary 41.12d]{Schrijver}, which is a corollary of Edmonds' matroid intersection theorem \cite{EdmondsMIpolytope}, $x$ is in the convex hull of incidence vectors of the common bases of the two matroids.  \qed 
\end{proof}


\section{Improved Bounds for $\frac12$-integer Points} \label{main}
In this section we show that $\frac{10}7 x \in T^n$ for all square points $x\in S^n$, and thus for all $\frac{1}{2}$-integer Boyd-Carr points $x$ as well. We also analyse the possibility of a similar proof for Carr-Vempala points that would have the advantage of also implying a ratio of $\frac{10}7$ for all $\frac{1}{2}$-integer points in $S^n$, as we will discuss at the end of this section. We begin by stating two properties  we  prove later to be sufficient to guarantee $\frac{10}7 x \in T^n$ for \emph{any} $\frac{1}{2}$-integer vector $x$ in $S^n$:  

\begin{itemize}
	\item[(A)] The support graph $G_x$ of $x$ has a Hamiltonian cycle $H$.
	\item[(B)] Vector $x$ is a convex combination of incidence vectors of $1$-trees of $K_n$ such that each $1$-tree satisfies the following condition:  it contains an even number of edges in every cut of $G_x$ consisting of four $\frac{1}{2}$-edges in $H$.
\end{itemize} 
We will use $\chi^H$ of (A) as part of the convex combination for $\frac{10}7 x$, which is globally good, since $H$ has only $n$ edges, but the $\frac{1}2$-edges of $H$ have too high a value (equal to $1$), contributing too much for the convex combination. To compensate for this, property~(B)  ensures that $x$ is not only a convex combination of $1$-trees, but these $1$-trees are even for certain edge cuts $\delta(S)$, allowing us to use a value essentially less than the  $\frac{x}{2}=\frac{1}4$ for  $\frac{1}2$ edges  in $H$ for the corresponding $T$-join.  The details of how to ensure feasibility for the $T$-join polyhedron will be given in the proof of Theorem~\ref{thm:sufficient}.

While condition (A) may look at first sight impossibly difficult to meet,  Lemma~\ref{lem:Ham} shows that one can count on the bonus of the naturally arising properties: any square point $x$ satisfies property (A), and the  additional property stated in this lemma together with the ``rainbow $1$-tree decomposition'' of Theorem \ref{thm:michel} will also imply (B) for square points. The reason we care about the somewhat technical property (B) instead of its more natural consequences is future research: in a new situation we may have to use the most general condition.   

\begin{lemma}\label{lem:B}
	Let $x$ be any square point.  Then $x$ satisfies both (A) and (B). 
\end{lemma}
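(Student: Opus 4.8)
The plan is to establish the two required properties separately, leveraging the two "new tools" developed earlier in the paper. Property~(A) is essentially immediate: a square point $x$ has a support graph $G_x$ which, together with the set $M$ of $1$-edges, forms a square graph in the sense of the definition preceding Lemma~\ref{lem:Ham} — the $\frac12$-edges form disjoint $4$-cycles (the squares), $G_x$ is cubic (each node is on one $1$-edge and two $\frac12$-edges) and $2$-edge-connected (since $x\in S^n$ satisfies the cut constraints~(\ref{st})). Hence Lemma~\ref{lem:Ham} applies directly and yields a Hamiltonian cycle $H$, in fact one containing all of $M$. I would record this stronger conclusion, namely $M\subseteq H$, because it pins down exactly which edges of $H$ are the $\frac12$-edges: since $H$ is Hamiltonian and contains the perfect matching $M$, it uses precisely one of the two diagonals-free pairs of each square, so the $\frac12$-edges lying in $H$ are exactly two opposite edges of each square, and the four $\frac12$-edges of each square are split two-in-$H$, two-out.

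For property~(B) the natural move is to invoke the rainbow $1$-tree decomposition of Theorem~\ref{thm:michel}. The key is to choose the pairing $\Pscr$ of the $\frac12$-edges cleverly so that the "one edge from each pair" condition forces the parity condition in~(B). Within each square I would pair the two $\frac12$-edges that lie on $H$ together, and pair the two $\frac12$-edges off $H$ together; this gives a partition $\Pscr$ of all $\frac12$-edges into pairs, each pair being two opposite edges of a square. Theorem~\ref{thm:michel} then gives that $x$ is a convex combination of $1$-trees each containing exactly one edge from every such pair. I must then check that the relevant cuts in~(B) — cuts of $G_x$ consisting of four $\frac12$-edges of $H$ — are controlled by this pairing. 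The four-$\frac12$-edge cuts in $H$ are exactly the cuts $\delta(S)$ where $S$ is a union of squares cut in a specific way; the crucial point is that such a cut meets each involved square in a set of $\frac12$-edges whose membership is governed by the pairs of $\Pscr$, and a $1$-tree taking exactly one edge per pair crosses such a cut an even number of times.

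The step I expect to be the main obstacle is precisely this last parity verification: I must argue that the $1$-trees produced by Theorem~\ref{thm:michel}, with the chosen pairing $\Pscr$, contain an even number of edges in every cut of the form described in~(B). The difficulty is that a cut consisting of four $\frac12$-edges in $H$ need not be confined to a single square — it may slice through the Hamiltonian structure in a way that distributes its four edges across the pairs of $\Pscr$. I would handle this by analysing how such a four-edge cut interacts with the squares: each square contributes either $0$ or $2$ of its four $\frac12$-edges to any cut (since removing a proper subset of a $4$-cycle's edges splits it, and the two-in-$H$/two-out structure forces an even contribution), and within the two-in-$H$ pair the $1$-tree picks exactly one edge, so the count of $1$-tree edges in the cut reduces to a sum of contributions each forced to be even by the pairing. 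Making this bookkeeping precise — identifying exactly which squares a four-$\frac12$-edge cut of $H$ can touch and confirming the even contribution in each case — is where the real work lies, and it is the part I would write out in full detail.
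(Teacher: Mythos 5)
There is one genuine gap, and it sits exactly where you claim things are ``essentially immediate.'' You assert that for a square point the support graph $G_x$ is cubic, that the $1$-edges form a perfect matching $M$, and that $(G_x,M)$ is therefore a square graph to which Lemma~\ref{lem:Ham} ``applies directly.'' This is false for general square points: by the paper's definition, square points are obtained from $\frac12$-integer Boyd-Carr points by replacing each $1$-edge with a $1$-path of \emph{arbitrary} length, so the interior nodes of those paths have degree $2$ in $G_x$ (not $3$), and the $1$-edges form disjoint paths, not a perfect matching. Your description is valid only for the subclass of $\frac12$-integer Boyd-Carr points, whereas the whole point of the lemma is the larger class. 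The repair is short but necessary, and it is what the paper does: first replace every $1$-path of $G_x$ by a single edge, which produces a genuine square graph; apply Lemma~\ref{lem:Ham} there to get a Hamiltonian cycle containing all the new $1$-edges; then re-expand those edges into their $1$-paths to get a Hamiltonian cycle of $G_x$ containing all $1$-edges. Note that this expansion step is harmless precisely because the cycle produced by Lemma~\ref{lem:Ham} contains $M$ entirely.

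Apart from this, your route for property (B) is the paper's route: your ``in-$H$ / off-$H$'' pairing coincides with pairing by the two perfect matchings of each square (since $H$ contains all $1$-edges, it meets each square in exactly a perfect matching), and Theorem~\ref{thm:michel} is invoked in the same way. Your final parity bookkeeping, however, is muddled where you say the count ``reduces to a sum of contributions each forced to be even by the pairing'' --- each pair contributes exactly \emph{one} tree edge, which is odd, not even. The clean argument: any cut of $G_x$ meets each square (a cycle) in an even number of edges; since all four cut edges lie in $H$ and $H$ contains only two edges of each square, each square contributes $0$ or $2$ edges to the cut, so the cut is the disjoint union of exactly two classes $P_1,P_2\in\Pscr$; each $1$-tree of the convex combination contains exactly one edge of $P_1$ and one of $P_2$, hence crosses the cut in exactly two edges. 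This closes the verification you correctly flagged as the main obstacle, and it is one sentence, not a case analysis over how a cut ``slices through the Hamiltonian structure.''
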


\begin{proof} If we replace the $1$-paths in the support graph $G_x$ by single $1$-edges, then by Lemma~\ref{lem:Ham} there is a Hamiltonian cycle for the new graph that contains all of the $1$-edges, and thus $G_x$ also has a Hamiltonian cycle $H$ that contains all of its $1$-edges.  Thus point $x$ satisfies Property (A) by Lemma~\ref{lem:Ham}. Moreover, since $H$ contains all the $1$-edges in $G_x$, it follows that $H$ contains a perfect matching from each square of $G_x$. 
	
	Define $\Pscr$ to be the partition of the set of $\frac12$-edges of $G_x$ into pairs whose classes are the perfect matchings of squares.   Then by Theorem \ref{thm:michel}, $x$ is in the convex hull of incidence vectors of $1$-trees that contain exactly one edge from each pair $P\in\Pscr$.    Property (B) follows, since every cut $C$ that contains four $\frac12$-edges of $H$ is partitioned by two classes  $P_1,P_2\in\Pscr$. (Indeed, we saw at the end of the first paragraph of this proof  that a square met by $H$ is met in a perfect matching.) Since $P_1$ and $P_2$ are met by exactly one edge of each tree of the constructed convex combination, $C$ is met by exactly two edges of each tree.  \qed
\end{proof} 
		
		
		Next we prove that properties (A) and (B) are sufficient to guarantee that $\frac{10}7 x \in T^n$ for \emph{any} $\frac{1}{2}$-integer point of $S^n$. 
		Recall that properties (A) and (B) are more general than what we need for square points: the condition of the theorem we prove does not require that the Hamiltonian cycle of property (A) contains the $1$-edges of $G_x$, as Lemma~\ref{lem:Ham} asserts for square points. We keep this generality of (A) and (B) to remain open to eventual posterior demands of future research.   
		
		\begin{theorem}\label{thm:sufficient}
			Let $x\in S^n$ be a  $\frac{1}{2}$-integer point  satisfying properties (A) and (B).
			Then  $\frac{10}7 x \in T^n$.
		\end{theorem}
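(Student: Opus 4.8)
The plan is to build $\frac{10}7 x$ as a convex combination of tours by mixing the Hamiltonian cycle $H$ of property~(A) with tours of the Wolsey type of Theorem~\ref{thm:Wolsey}, i.e. a $1$-tree plus a $T$-join. Using the $1$-tree decomposition $x=\sum_i\lambda_i\chi^{F_i}$ guaranteed by property~(B), I would give $\chi^H$ the weight $\tfrac17$ and the tour $\chi^{F_i}+\chi^{J_i}$ the weight $\tfrac67\lambda_i$, where each $J_i$ is a $T_{F_i}$-join still to be chosen. Because $\tfrac17+\tfrac67\sum_i\lambda_i=1$ and each $\chi^{F_i}+\chi^{J_i}$ is a genuine tour vector (the spanning $1$-tree $F_i$ keeps it connected, the join makes every degree even), the result is a convex combination of tour vectors; and since $T^n$ is upward closed it suffices to choose the joins so that this combination is coordinatewise at most $\tfrac{10}7x$.

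The key reduction is that all of these joins can be drawn from a single fractional $T$-join template. I would define $y\in\Rset^{E_n}$, supported on $E_x$, by
\[
y_e=\tfrac16,\ \tfrac13,\ \tfrac12,\ \tfrac23
\]
according as $e$ is a $\tfrac12$-edge in $H$, a $\tfrac12$-edge off $H$, a $1$-edge in $H$, or a $1$-edge off $H$. A one-line check shows $\tfrac17\chi^H+\tfrac67(x+y)=\tfrac{10}7x$ on every edge type, so the target is met exactly provided that, for each $i$, the vector $y$ majorates a convex combination of $T_{F_i}$-joins. By the $T$-join polyhedron description~(\ref{Tjoinpoly}) this holds precisely when
\[
y(C)\ge 1 \quad\text{for every } T_{F_i}\text{-cut } C.
\]
Here I would use the standard facts that $\delta(S)$ is a $T_{F_i}$-cut exactly when $|F_i\cap\delta(S)|$ is odd (as $F_i\subseteq E_x$), and that every cut $C=\delta_{G_x}(S)$ satisfies $x(C)\ge 2$.

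The heart of the argument, and the step I expect to be the main obstacle, is verifying $y(C)\ge 1$ for all these cuts. Comparing each edge's $y$-value to its $x$-value, the smallest rate $y_e/x_e$ is $\tfrac13$, attained only on $\tfrac12$-edges of $H$; hence $y(C)\ge\tfrac13 x(C)$, which already gives $y(C)\ge 1$ whenever $x(C)\ge 3$. For the finitely many remaining types, $x(C)\in\{2,\tfrac52\}$, I would run a short case analysis on the numbers of $1$-edges and $\tfrac12$-edges crossing $C$ and their membership in $H$, invoking the parity constraint that the cycle $H$ meets every cut in an even number of edges. Every case yields $y(C)\ge 1$ with exactly one exception: the cut consisting of four $\tfrac12$-edges all lying in $H$, where $y(C)=\tfrac46=\tfrac23<1$. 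This is precisely the cut quarantined by property~(B): since each $F_i$ meets it in an even number of edges, it is not a $T_{F_i}$-cut and imposes no constraint. Note that the weight $\tfrac17$ on $\chi^H$ is the largest one can afford here — it is calibrated so that a cut of six $\tfrac12$-edges in $H$ (which property~(B) does \emph{not} cover) still satisfies $y(C)=6\cdot\tfrac16=1$, while the genuinely dangerous four-edge cuts are handled only by~(B).

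Once the cut condition is in place, the $T$-join polyhedron provides, for each $i$, a convex combination of $T_{F_i}$-joins dominated by $y$; substituting these for the $J_i$ turns $\tfrac17\chi^H+\tfrac67\sum_i\lambda_i(\chi^{F_i}+\chi^{J_i})$ into a convex combination of tours that is coordinatewise at most $\tfrac{10}7x$, and upward-closedness of $T^n$ then gives $\tfrac{10}7x\in T^n$, completing the proof.
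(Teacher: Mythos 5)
Your proposal is correct and is essentially the paper's own proof: your edge-type table for $y$ is exactly the paper's vector $y=\frac{2}{3}x-\frac{1}{6}\chi^H$, the mixing weights $\frac{1}{7}$ and $\frac{6}{7}$ and the identity $\frac{1}{7}\chi^H+\frac{6}{7}(x+y)=\frac{10}{7}x$ are the same, and your cut verification is the paper's Claim, merely organized by the value of $x(C)$ and edge types rather than by $|H\cap C|\in\{2,4,\ge 6\}$, with property (B) quarantining the same exceptional cut of four $\frac{1}{2}$-edges of $H$. The only cosmetic difference is that the paper notes $x(C)$ is always an integer (the $\frac{1}{2}$-edges form edge-disjoint cycles), which makes your $x(C)=\frac{5}{2}$ case vacuous.
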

		
		\begin{proof} Let $G_x=(V_n,E_x)$ be the support graph of $x$, and let $H$ be any Hamiltonian cycle of $G_x$, which exists according to (A). Let the $1$-trees in the convex combination for property (B) be $F_i$, $i = 1, 2,...,k$, and for any $1$-tree $F$ of $G_x$ let $T_{F}$ be the set of odd degree nodes in the graph $(V_n,F)$.  Consider the vector $y\in\Rset^{E_n}$ defined as $y:=\frac{2}{3}x - \frac{1}{6}\chi^H$.
			
		\medskip
		
			\noindent
			{\bf Claim}.  For any $1$-tree $F$ of $G_x$ which satisfies the condition of property (B), vector $y$ is in the $T_{F}$-join polyhedron of $K_n$.
			\medskip
			
\noindent To prove the claim, we show that $y$ satisfies the constraints of the $T_{F}$-join polyhedron (\ref{Tjoinpoly}) of $K_n$. Clearly $y_e \ge 0$ for all $e\in E_n$.  Let $C$ be a $T_{F}$-cut in $K_n$.  We must show $y(C) \ge 1$.  Note that $|H \cap C|\ne 0$ is even, and $H\subseteq E_x$.

			\medskip\noindent
Case 1:  $|H \cap C|= 2$.   Since $x(C) \ge 2$ we have $y(C) \ge 2(\frac{2}{3}) - 2(\frac{1}{6}) = 1$, as required.  

\medskip\noindent
Case 2:   $|H \cap C| = 4$. Since $x\in S^n$, we have $x(C) \ge 2$ and since $x$ is $\frac{1}{2}$-integer the $\frac{1}{2}$-edges form edge-disjoint cycles, so  $x(C)$ is integer:  $x(C)\ge 3$, since  $x(C)=2$ would imply that $C$ consists of the four edges of $H \cap C$, and by (B) $C$ is then not a $T_{F}$-cut;  so $y(C) \ge  3(\frac{2}{3}) - 4(\frac{1}{6}) \ge 1$.

\medskip\noindent
Case 3: $|H  \cap C|\ge 6$. Then for all $e \in H  \cap C$: $y_e = \frac{2}{3} x_e - \frac{1}{6}\chi^H_e \ge\frac{2}{3}(\frac{1}{2})-\frac{1}{6}=\frac{1}{6}$, so $y(C)\ge 1$, which completes the proof of the claim.

			\medskip
		According to the claim, for all $i\in\{1,\ldots, k\}$ $y$ majorates a convex combination of $T_{F_i}$-joins, and adding any one of these to $\chi^{F_i}$, we get tours. The convex combination of these tours  is majorated by $\chi^{F_i} + y$, so  $\chi^{F_i} + y \in T^n$ for all $i=1,\ldots,k$, and therefore $x +y \in T^n$. Thus  $z:= \frac{1}{7}\chi^H + \frac{6}{7}(x + y)= \frac{10}{7} x$ is also in $T^n$, which completes the proof.  \qed
		\end{proof}

		\medskip
		Our main result is an immediate corollary of this theorem: 
		
		
		
		\begin{theorem}\label{thm:main}
			Let $x$ be a square point. Then $\frac{10}{7}x \in T^n$. 
		\end{theorem}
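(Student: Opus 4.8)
The plan is to observe that this statement is an immediate consequence of the two results already established, namely Lemma~\ref{lem:B} and Theorem~\ref{thm:sufficient}, so that no new argument is required and the work reduces to verifying that the hypotheses of the latter are met. First I would recall that, by definition, a square point $x$ is a $\frac{1}{2}$-integer point of $S^n$, so that the standing assumption of Theorem~\ref{thm:sufficient} (that $x$ be $\frac{1}{2}$-integer) is satisfied automatically. It then remains only to check properties (A) and (B).

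Second, I would invoke Lemma~\ref{lem:B}, which asserts precisely that every square point $x$ satisfies both (A) and (B). This is the step that carries the genuine content: property (A) comes from the Hamiltonicity of square graphs (Lemma~\ref{lem:Ham}, via Kotzig's theorem on Eulerian trails with forbidden bitransitions), with the Hamiltonian cycle chosen to contain all the $1$-edges; property (B) then follows by applying the rainbow $1$-tree decomposition of Theorem~\ref{thm:michel} to the partition $\Pscr$ of the $\frac{1}{2}$-edges into the perfect matchings of the squares, so that each $1$-tree meets every four-$\frac{1}{2}$-edge cut of $H$ in an even number of edges.

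Finally, with (A) and (B) in hand, I would apply Theorem~\ref{thm:sufficient} directly to conclude that $\frac{10}{7}x \in T^n$. There is essentially no obstacle here: the difficulty has been entirely absorbed into the earlier lemma and theorem, and the present statement is, as the text indicates, an immediate corollary obtained by composing them. The only point meriting a word of care is the logical chain of dependencies---namely that Lemma~\ref{lem:B} supplies exactly the two hypotheses that Theorem~\ref{thm:sufficient} consumes---which should be stated explicitly so the reader sees that the two prior results dovetail without any gap.
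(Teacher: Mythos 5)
Your proposal is correct and follows exactly the paper's own argument: the paper likewise proves Theorem~\ref{thm:main} by noting that square points are $\frac{1}{2}$-integer, invoking Lemma~\ref{lem:B} to obtain properties (A) and (B), and then applying Theorem~\ref{thm:sufficient}. There is no gap and no difference in approach.
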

		
		\begin{proof} By Theorem~\ref{thm:sufficient} it is enough to make sure that $x$ satisfies properties (A) and (B),  
			which is exactly the assertion of Lemma~\ref{lem:B}. 
			\qed
		\end{proof}
		
		
		Since $\frac{1}2$-integer Boyd-Carr points are square points we have: 
		
			\begin{corollary}\label{cor:BoydCarr}
		If $x$  is a $\frac{1}2$-integer Boyd-Carr point, then   $\frac{10}{7}x \in T^n$.
		\end{corollary}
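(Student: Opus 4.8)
The plan is to observe that every $\frac12$-integer Boyd-Carr point is already a square point, and then to invoke Theorem~\ref{thm:main} with no further work. The entire content of the argument is a matching of definitions, so the approach reduces to a single class-inclusion check followed by a direct appeal.

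First I would unpack the two definitions and line them up. A square point is, by definition, a $\frac12$-integer point $x\in S^n$ whose $\frac12$-edges form disjoint $4$-cycles (the squares) in the support graph $G_x$. A Boyd-Carr point that is assumed to be $\frac12$-integer satisfies, by condition (iii) of its definition, that its fractional edges form disjoint $4$-cycles. The only thing I need to note is that for a $\frac12$-integer point the fractional edges are exactly the $\frac12$-edges, since $x_e\in\{0,\tfrac12,1\}$ forces every fractional value to equal $\tfrac12$. Hence condition (iii) becomes precisely the defining property of a square point, and the $\frac12$-integer Boyd-Carr points form a subclass of the square points.

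Having established this inclusion, the conclusion is immediate: Theorem~\ref{thm:main} gives $\frac{10}7 x\in T^n$ for \emph{every} square point $x$, and a $\frac12$-integer Boyd-Carr point is such an $x$. There is no genuine obstacle in this corollary; all of the substance lives in Theorem~\ref{thm:main} (and hence in Theorem~\ref{thm:sufficient} together with Lemma~\ref{lem:B}), and the only step requiring any care is the trivial identification of ``fractional'' with ``$\frac12$-valued'' for $\frac12$-integer vectors. The corollary simply records that the Boyd-Carr class is captured by the broader square-point framework to which the main theorem already applies.
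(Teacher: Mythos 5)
Your proposal is correct and coincides exactly with the paper's treatment: the paper derives the corollary in one line by observing that $\frac{1}{2}$-integer Boyd-Carr points are square points and invoking Theorem~\ref{thm:main}. Your explicit identification of ``fractional edges'' with ``$\frac12$-edges'' for $\frac12$-integer vectors is the (only) detail the paper leaves implicit, and it is right.
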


		Theorem~\ref{thm:main} immediately implies the following optimization form of the above corollary: 
		
		\begin{corollary}\label{cor:approxalg}
			Let  $c\in \Rset^{E_n}$ be a metric cost function optimized by a square point $x$, i.e. $cx = LP(c)$.  Then there exists a Hamiltonian cycle of cost at most $\frac{10}{7} LP(c)$  in $K_n$. 
		\end{corollary}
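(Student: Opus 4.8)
The plan is to obtain this optimization statement as the routine algorithmic counterpart of the polyhedral containment $\frac{10}{7}x \in T^n$ supplied by Theorem~\ref{thm:main}, following exactly the rounding-and-shortcutting scheme sketched in the discussion preceding Theorem~\ref{thm:vectorform}. First I would invoke Theorem~\ref{thm:main}: since $x$ is a square point, $\frac{10}{7}x \in T^n$. Because $T^n$ is the convex hull of incidence vectors of tours and satisfies $T^n + \Rset^{E_n}_+ = T^n$, this membership says precisely that $\frac{10}{7}x$ majorates a convex combination $\sum_i \lambda_i \chi^{J_i}$ of incidence vectors of tours $J_i$, with $\lambda_i \ge 0$ and $\sum_i \lambda_i = 1$.

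Next I would bring in the cost function. Since $c$ is metric it is non-negative, so from $\sum_i \lambda_i \chi^{J_i} \le \frac{10}{7}x$ coordinatewise we get
\[
\sum_i \lambda_i\, c(J_i) = c\Big(\textstyle\sum_i \lambda_i \chi^{J_i}\Big) \le \tfrac{10}{7}\,cx = \tfrac{10}{7}\,LP(c),
\]
where the last equality uses the hypothesis $cx = LP(c)$. An averaging argument over the convex combination then yields at least one tour $J_i$ with $c(J_i) \le \frac{10}{7}\,LP(c)$.

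Finally I would convert this tour into a Hamiltonian cycle. Fixing an Eulerian trail of the spanning Eulerian multisubgraph $J_i$ and shortcutting it repeatedly—replacing a subsequence $a,b,c$ by $a,c$ whenever $b$ has already been visited—does not increase the cost, by the triangle inequality satisfied by the metric $c$. The outcome is a Hamiltonian cycle of $K_n$ of cost at most $c(J_i) \le \frac{10}{7}\,LP(c)$, which is the claim.

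Every ingredient is already in hand—the containment from Theorem~\ref{thm:main}, the non-negativity of $c$, and the metric shortcutting—so I do not expect a genuine obstacle; the corollary is just the translation of the vector statement into its optimization form. The only point needing a little care is that $\frac{10}{7}x$ \emph{majorates} rather than equals the convex combination of tours, so the cost inequality above leans on $c \ge 0$; this is exactly why the metric assumption is used twice, once for non-negativity and once (via the triangle inequality) for the shortcutting.
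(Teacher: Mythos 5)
Your proof is correct and follows essentially the same route as the paper's: the paper likewise combines $\frac{10}{7}LP(c) = c\left(\frac{10}{7}x\right) \ge OPT_{T^n}(c)$ (via Theorem~\ref{thm:main}) with $OPT_{T^n}(c) \ge OPT(c)$ for metric costs, which is exactly your averaging-plus-shortcutting argument stated more compactly. Your extra care about majorization versus equality and the non-negativity of $c$ is a harmless (and accurate) elaboration of the same idea.
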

	
	\begin{proof} We have $\frac{10}{7} LP(c)=\frac{10}{7} cx= c(\frac{10}{7} x)\ge OPT_{T^n}(c)$, where the last inequality follows from $\frac{10}{7} x$ being a convex combination of tours by Theorem~\ref{thm:main}.  As  $OPT_{T^n}(c) \ge OPT(c)$ for metric costs, the result follows. \qed	
	\end{proof}

In the following section we show that the above existence theorems and their corollaries can also be accompanied   with polynomial algorithms. We finish this section by showing that the integrality gap of square points is at least $\frac{4}3$, providing new  examples showing that  Conjecture \ref{conj:4over3} cannot be improved.  

Define a subclass of square points we call \emph{$k$-donuts}, $k\in \mathbb{Z}$, $k\ge 2$: the support graph $G_x=(V_n,E_x)$ consists of $k$ $\frac{1}2$-squares arranged in a cyclic order, where the squares are joined by $1$-paths, each of length $k$.  
		In Figure \ref{fig:donut} the support graph of a $4$-donut is shown.  In the figure, dashed edges represent $\frac{1}2$-edges and solid edges represent $1$-edges.
		
		We define the cost of each edge in $E_x$ to be $1$, except for the pair of $\frac12$-edges in each square that are opposite to one another, with one edge on the inside of the donut, and one on the outside, which are defined to have cost $k$ (see the figure, where only edges of cost $k$ are labelled, and all other edges in $E_x$ have cost $1$). The costs of other edges of $K_n$ not in $E_x$ are defined by the metric closure (cost of shortest paths in $G_x$). 
		For these defined costs $c^{(k)}$, we have $OPT(c^{(k)})= 4k^2 - 2k + 2$ and $LP(c^{(k)}) \le c^{(k)}x = 3k^2 + k$, thus  $\lim_{k\rightarrow\infty}\frac{OPT(c^{(k)})}{LP(c^{(k)})}\ge \frac{4}3$.  Along with Theorem \ref{thm:main}, this gives the following:
		
		\begin{corollary}\label{cor:intgapbds}
			The integrality gap for square points lies between $\frac{4}3$ and $\frac{10}7$. 
		\end{corollary}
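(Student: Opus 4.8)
The plan is to prove the two inequalities separately: the upper bound $\tfrac{10}{7}$ is essentially a restatement of the machinery already built, while the lower bound $\tfrac43$ is extracted from the explicit $k$-donut family defined just above the corollary.

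For the upper bound I would invoke Theorem~\ref{thm:main} directly. Every square point $x$ satisfies $\tfrac{10}{7}x\in T^n$, so $\tfrac{10}{7}x$ is a convex combination of incidence vectors of tours. Given any metric cost function $c$ optimized by $x$, at least one tour in this combination has cost at most $\tfrac{10}{7}cx=\tfrac{10}{7}LP(c)$, and shortcutting it produces a Hamiltonian cycle of the same or smaller cost; this is exactly Corollary~\ref{cor:approxalg}. Hence $OPT(c)/LP(c)\le\tfrac{10}{7}$ on every instance whose subtour optimum is a square point, which is the upper half of the claim.

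For the lower bound I would argue through the $k$-donuts. First I would verify that the associated vector $x$ really is a square point: each node has $x$-degree $2$, every cut constraint~(\ref{st}) holds, and the $\tfrac12$-edges decompose into the $k$ disjoint squares. Feasibility of $x$ gives $LP(c^{(k)})\le c^{(k)}x=3k^2+k$ immediately from the assigned edge costs. The substantive part is the bound $OPT(c^{(k)})\ge 4k^2-2k+2$. Here I would reason structurally: the internal nodes of each length-$k$ connecting path are cheaply reachable only through their two cost-$1$ path-neighbours, so any Hamiltonian cycle must in effect traverse whole paths, which already forces $2k^2$ of cost from the $2k$ paths. At each square the tour, having committed the incident path edges, must complete a perfect matching of the $4$-cycle, costing $2$ for the cheap opposite pair or $2k$ for the expensive opposite pair. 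The cyclic layout of the donut is calibrated so that choosing the cheap matching in every square breaks the tour into several disjoint cycles rather than a single Hamiltonian one; reconnecting them forces expensive cost-$k$ matchings in all but one square (or, equivalently, costly detours through the metric closure), and the resulting surplus is precisely what pushes $OPT(c^{(k)})$ up to $4k^2-2k+2$. Combining the two estimates yields $OPT(c^{(k)})/LP(c^{(k)})\ge (4k^2-2k+2)/(3k^2+k)$, whose limit as $k\to\infty$ is $\tfrac43$, so the integrality gap for square points is at least $\tfrac43$.

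The main obstacle is the lower bound on $OPT(c^{(k)})$, and inside it the connectivity/parity argument showing that the all-cheap configuration cannot form a single cycle. Making this rigorous requires fixing precisely how the four corner nodes of each square attach to its incoming and outgoing paths and then tracking the global cycle structure around the ring; this is a parity obstruction of the same Kotzig/bitransition flavour seen in Lemma~\ref{lem:Ham}, now deployed to \emph{force} extra cost rather than to guarantee Hamiltonicity. One must also rule out that the metric closure supplies a cheaper shortcut that avoids traversing the long paths, which is exactly why the opposite square edges are given cost $k$ and the connecting paths are given length $k$: these choices ensure that no shortcut helps asymptotically, so that the ratio genuinely tends to $\tfrac43$.
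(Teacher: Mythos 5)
Your proposal follows exactly the paper's route: the upper bound is Theorem~\ref{thm:main} (equivalently Corollary~\ref{cor:approxalg}), and the lower bound comes from the same $k$-donut family with $LP(c^{(k)})\le c^{(k)}x=3k^2+k$ and $OPT(c^{(k)})=4k^2-2k+2$, giving a ratio tending to $\frac{4}{3}$. You actually go beyond the paper, which asserts the value of $OPT(c^{(k)})$ without proof: your structural sketch --- all-cheap matchings yield disjoint cycles, so a Hamiltonian cycle within the support must use the expensive matching in all but one square, costing $2k^2+(k-1)\cdot 2k+2=4k^2-2k+2$ --- is the correct argument, modulo the rigor you yourself flag concerning shortcuts through the metric closure.
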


		\begin{figure}
			\centering
			\includegraphics[width=.4\textwidth]{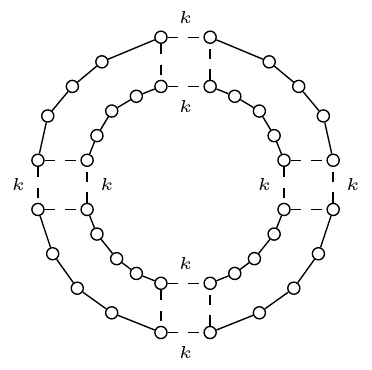}
			\caption{Graph $G_x$ for a $k$-donut $x$, $k$=4.}
			\label{fig:donut}
		\end{figure}

		To conclude this section we briefly discuss the structure of  Carr-Vempala points.   
		Note that for the Boyd-Carr points that have been our focus, the transformation used from general vertices $x\in S^n$ to these Boyd-Carr points does not completely preserve the denominators.  In particular, $\frac{1}2$-integer vertices of $S^n$ get transformed into Boyd-Carr points $x^*$ with $x^*_e$ values in $\{1, \frac{1}2, \frac{3}4, \frac{1}4, 0\}$.
		However, for the Carr-Vempala points, it is clear from their construction in \cite{CV} that general $\frac{1}2$-integer vertices of $S^n$ lead to $\frac{1}2$-integer Carr-Vempala vertices. In fact we have the following theorem which, if Conjecture \ref{conj:half} is true, would provide a nice approach for proving Conjecture \ref{conj:fourthird}, since it is given for free that Carr-Vempala vertices satisfy property (A):
		\begin{theorem}\label{CarrVempalahalf}
			If $\rho x \in T^n$ for each $\frac{1}2$-integer Carr-Vempala point $x\in S^n$, then   $\rho x \in T^n$ for every $\frac{1}2$-integer point $x \in S^n$.
		\end{theorem}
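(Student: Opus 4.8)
The plan is to invoke the reduction of Carr and Vempala \cite{CV} that exhibits their points as a fundamental class, and to observe that the one additional feature we need — preservation of $\frac12$-integrality — holds because every elementary operation of that reduction only ever creates new edges of value $1$. Concretely, \cite{CV} describe a transformation $\Phi$ that takes an arbitrary point $x\in S^n$ and, through a sequence of local operations on the support graph $G_x$, produces a Carr-Vempala point $\Phi(x)\in S^m$ for some $m\ge n$, together with the \emph{gap-preserving} guarantee that $\rho\,\Phi(x)\in T^m$ implies $\rho\,x\in T^n$; this guarantee is exactly what makes the Carr-Vempala points a fundamental class. Granting this, the theorem reduces to the single verification that $\Phi(x)$ is $\frac12$-integer whenever $x$ is.

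First I would classify the local structure of a $\frac12$-integer point at each node $v$: since $x(\delta(v))=2$ and $x_e\in\{0,\frac12,1\}$, node $v$ is incident either to two $1$-edges, to one $1$-edge and two $\frac12$-edges (the Carr-Vempala type), or to four $\frac12$-edges. The reduction $\Phi$ needs only to eliminate the first and third types. A node incident to two $1$-edges $uv,vw$ is suppressed by replacing the path $u,v,w$ with a single edge $uw$ of value $1$; a degree-$4$ node incident to four $\frac12$-edges is split into two degree-$3$ nodes joined by a new edge of value $1$, each inheriting two of the original $\frac12$-edges. In both operations the only value introduced is $1$ and all surviving values are unchanged, so membership in $\{0,\frac12,1\}$ is preserved. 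I would also check that the routing of the $\frac12$-edges into a single Hamiltonian cycle, which is arranged in \cite{CV} by the order and pairing of the splits, affects only the incidence structure (and the verification of the subtour constraints) and never the edge values, so that it too leaves $\frac12$-integrality intact.

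With these observations the theorem follows at once: given any $\frac12$-integer $x\in S^n$, the point $\Phi(x)$ is a $\frac12$-integer Carr-Vempala point, so by hypothesis $\rho\,\Phi(x)\in T^m$, and the gap-preserving property then yields $\rho\,x\in T^n$. Note that because $\Phi$ and its gap-preserving lifting are defined combinatorially from the support and the edge values, they apply to an arbitrary $\frac12$-integer point of $S^n$, not merely to a vertex, which is what the statement requires.

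The main obstacle is not the logic of the reduction but faithfully reproducing the Carr-Vempala construction and confirming, operation by operation, that no step ever creates a value outside $\{0,\frac12,1\}$. This is precisely the point on which the Carr-Vempala reduction differs from the Boyd-Carr one: as remarked just before the theorem, the Boyd-Carr transformation forces $\frac12$-integer vertices into points with values in $\{0,\tfrac14,\tfrac12,\tfrac34,1\}$, because shaping the fractional edges into disjoint $4$-cycles requires redistributing edge values, whereas the Carr-Vempala reduction only suppresses degree-$2$ nodes and splits off $1$-edges. Verifying that every step of \cite{CV} is of this benign, value-preserving kind is the crux; once it is done, the gap-preserving property is inherited verbatim from \cite{CV}.
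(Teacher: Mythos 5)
Your proposal takes essentially the same route as the paper: the paper states this theorem without a separate proof, relying exactly on the observation made just before it that the Carr--Vempala construction of \cite{CV} turns $\frac{1}2$-integer points of $S^n$ into $\frac{1}2$-integer Carr--Vempala points, so that the gap-preserving (fundamental-class) property of \cite{CV} restricts verbatim to the half-integer case. Your node-by-node check that every operation of the reduction only introduces edges of value $1$ (and your remark contrasting this with the Boyd--Carr transformation, which produces values in $\{0,\tfrac14,\tfrac12,\tfrac34,1\}$) is precisely the ``clear from their construction'' step that the paper leaves to the reader.
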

		In light of these results and conjectures it seems worthwhile to study fundamental classes further and the role of $\frac12$-integer points in them.
		
		\section{Related Algorithms}\label{sec:delta}
		
		In this section we show that some of the existence theorems stated in the previous sections can be accompanied by simple combinatorial algorithms that can be executed in polynomial time.  The main result of this section is a  polynomial-time algorithm for finding the Hamiltonian cycle of Corollary~\ref{cor:approxalg} in a completely elementary way.
		We point out that the bridge taking us to this result also leads to a polynomial-time algorithm for finding a minimum cost Hamiltonian cycle containing the $M$ edges in a square graph, generalizing Lemma~\ref{lem:Ham}. 
		
		It turns out that the greedy algorithm is the main ingredient of our algorithms, and delta-matroids are the structure behind this phenomenon. We give a short introduction to delta-matroids, their greedy algorithm and their connection to our results.  We first introduce the algorithm directly on our combinatorial objects below. 
		
	\medskip
		\noindent 
		{\bf Greedy Algorithm (HAM) for Hamiltonian cycles in square graphs }
		
		\smallskip\noindent
		{\bf Input}: A square graph $(G,M)$ and cost vector $c\in\Rset^{E(G)}$.
		
		\smallskip\noindent
		{\bf Output}: A Hamiltonian cycle containing $M$.

		\smallskip\noindent
		1. For each square $C$ of $G$ let $w_C$ be the absolute value of the difference of the cost of the two perfect matchings of $C$. Order the squares in non-increasing order of $w_C$, from $C_1$ to $C_t$; $i:=0$.  
		
		\smallskip\noindent
		2. i:=i+1 ; while $i\le t$ do: 
		
\setlength{\leftskip}{.7cm}	
 \noindent We keep exactly one of the two perfect matchings of $C_i$ and delete both edges of the other perfect matching according to the following rule:
 			
\setlength{\leftskip}{1.2cm}
\noindent If the graph remains connected after both of these choices, keep the perfect matching of $C_i$ which has smaller cost (if the costs are equal, break ties arbitrarily). 
			
\noindent If the graph remains connected after exactly one of these two possible choices, then make this choice.

\setlength{\leftskip}{.7cm}
\noindent There is no other case according to the following claim:
			
\medskip			
\noindent{\bf Claim:}  The graph remains connected with at least one of the two choices.

\smallskip\noindent{\bf Proof}: Suppose  that at iteration $i$, $G-C_i$ (edge-deletion) is not connected. Then it  has at least two of the four arising nodes of degree $1$ (the only nodes of odd degree after contracting the remaining squares) in each component. It  follows that it has two components. If adding one of the two perfect matchings it is still not connected, then all edges of $C_i$ are induced by one of the two components, so  $G$ is also not connected, a contradiction. \qed
\setlength{\leftskip}{0pt}			
\noindent 3. Output the constructed graph which is a Hamiltonian cycle containing $M$ (all degrees are two, and it is connected because of the claim).
		
		\noindent{\bf end}

		\medskip
		     We will see that this algorithm determines  the Hamiltonian cycle of minimum cost containing $M$ (Theorem~\ref{thm:optham}), and this is also not difficult to prove directly along the same lines as the optimality of the greedy algorithm for optimal spanning trees, as follows:   
		Suppose for a contradiction that the algorithm finds $H$, while $K$ is a Hamiltonian cycle  of smaller cost containing $M$.  
	Then there exists a square $C_i$ for which $K$ uses a different perfect matching than $H$ and the cost of this perfect matching is strictly less than the one used by $H$. 
		Let $i$  be the smallest index for which this is true, and assume that over all minimum cost Hamiltonian cycles containing $M$, we chose $K$ to be the one for which this $i$ is as large as possible. By the algorithm, when we considered square $C_i$, we know that removing the smaller cost perfect matching disconnected the graph (or it would have been chosen). Thus there must exist another square $C_j$ crossing the cut formed by this disconnection for which $K$ chooses a different perfect matching than $H$, and $j < i$.  By choice of index $i$, we know that the perfect matching used in $C_j$ by $K$ has cost greater than or equal to the one used by $H$.  Now consider the new Hamiltonian cycle $K^\prime$ obtained by taking $K$, and swapping the perfect matchings used by $K$ in squares $C_i$ and $C_j$.  We have $c(K^\prime) = c(K) + w_{C_i} - w_{C_j}$ and since $w_{C_j} \ge w_{C_i}$ by the index ordering, we must have $K^\prime$ is another minimum cost Hamiltonian cycle.  Again there must exist a square $C_r$ for which $K^\prime$ uses a different perfect matching than $H$ and the cost of this perfect matching is strictly less than the one used by $H$, but by construction of $K^\prime$ we have $r>i$.  But this contradicts our choice of $K$.
		
		
		
		We now propose the following algorithm for finding a tour of relatively small cost for cost functions optimized at square points. 
		
			\medskip
		\noindent 
		{\bf Algorithm (TOUR) for  tours in the case of a square optimum for the subtour LP }
		
			\smallskip\noindent
		{\bf Input}: Costs $c\in\Rset^{E_n}$ and a square point $x$ optimizing $c$ on the subtour LP, i.e. $c(x)= LP(c)$.
		
		\smallskip\noindent
		{\bf Output}: A tour in $G_x$.

		\smallskip\noindent
		1. Determine the support graph $G_x$, and call (HAM) for the square graph $(G,M)$  that results from replacing  each $1$-path of $G_x$ by one single edge whose cost is the sum of the costs of the replaced edges, and defining $M$ to be the set of these single edges.   Let $H$ be the Hamiltonian cycle of $G_x$ obtained by taking the output of (HAM) and replacing all edges of $M$ by their respective $1$-paths. 
		
			\smallskip\noindent
		2. 	As in Theorem~\ref{thm:michel} and Lemma \ref{lem:B}, determine the partition $\Pscr$ of the $\frac{1}{2}$-edges of $G_x$ into pairs whose classes are the perfect matchings of the squares of $G_x$, and find the spanning tree
		$F^*$ of $G_x$ of minimum cost having exactly one edge in each $P\in\Pscr$ in polynomial time with Edmonds matroid intersection algorithm \cite{EdmondsMIpolytope}. 
		
				\smallskip\noindent
		3.  Let $T_{F^*}$ be the set of odd degree nodes in the graph $(V_n,F^*)$.  Find a minimum cost $T_{F^*}$-join in $G_x$. Note this can be done in polynomial time (cf. \cite{Schrijver}).		
		
			\smallskip\noindent
		4. Let $J^*$ be the union of $F^*$ and the $T_{F^*}$-join from Step 3, and output the one of $H$ or $J^*$ having smaller cost.   
		

		\smallskip
		We can now complete Corollary~\ref{cor:approxalg} with an algorithmic postulate.

		\begin{theorem}\label{thm:approxalg}
			Let  $c\in \Rset^{E_n}$ be a metric cost function optimized by a square point $x$, $cx = LP(c)$.  Then there exists a Hamiltonian cycle of cost at most $\frac{10}{7} LP(c)$  in $K_n$  and (TOUR) can be used to find such a cycle in polynomial time. 
		\end{theorem}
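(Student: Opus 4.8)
The existence of such a Hamiltonian cycle is already guaranteed by Corollary~\ref{cor:approxalg}; the content to be established is that the concrete output of (TOUR) attains the bound $\frac{10}{7}LP(c)$ and that every step runs in polynomial time. The plan is to mirror the proof of Theorem~\ref{thm:sufficient}, but to replace its convex combination of tours by the two explicit tours the algorithm builds. Write $H$ for the Hamiltonian cycle returned in Step~1, which contains all $1$-edges since (HAM) returns a cycle containing $M$, and which is of minimum such cost by Theorem~\ref{thm:optham}. Set $y := \frac{2}{3}x - \frac{1}{6}\chi^H$ exactly as in Theorem~\ref{thm:sufficient}, let $F^*$ be the minimum cost rainbow $1$-tree of Step~2, and let $J^*$ be the union of $F^*$ with the minimum cost $T_{F^*}$-join of Step~3. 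Since $F^*$ is connected and spanning and the $T_{F^*}$-join makes every degree even, $J^*$ is a tour; after metric shortcutting it yields a Hamiltonian cycle of no larger cost, and $H$ is already one.

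The heart of the argument is the cost estimate. First I would record the identity $\frac{1}{7}c(H) + \frac{6}{7}c(x+y) = \frac{10}{7}cx$, which is just the cost version of the convex combination $\frac{10}{7}x = \frac{1}{7}\chi^H + \frac{6}{7}(x+y)$ appearing in the proof of Theorem~\ref{thm:sufficient}. Next I would bound $c(J^*) \le c(x+y) = cx + cy$ in two pieces. For the tree part, Lemma~\ref{lem:B} (via Theorem~\ref{thm:michel}) writes $x$ as a convex combination $\sum_i \lambda_i \chi^{F_i}$ of rainbow $1$-trees; since $c \ge 0$ and $F^*$ is a minimum cost rainbow $1$-tree, $c(F^*) \le \min_i c(F_i) \le cx$. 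For the join part, $F^*$ itself satisfies property (B) (any rainbow $1$-tree meets a cut of four $\frac12$-edges of $H$ in exactly two edges, as in Lemma~\ref{lem:B}), so the Claim in the proof of Theorem~\ref{thm:sufficient} shows $y$ lies in the $T_{F^*}$-join polyhedron of $K_n$; as $y$ is supported on $E(G_x)$, it is feasible for the $T_{F^*}$-join polyhedron of $G_x$ as well, whence the minimum cost $T_{F^*}$-join computed in $G_x$ costs at most $cy$. Adding the two bounds gives $c(J^*) \le cx + cy$, and therefore
\begin{equation*}
\min\{c(H), c(J^*)\} \le \tfrac{1}{7}c(H) + \tfrac{6}{7}c(J^*) \le \tfrac{1}{7}c(H) + \tfrac{6}{7}c(x+y) = \tfrac{10}{7}cx = \tfrac{10}{7}LP(c),
\end{equation*}
so the cheaper of $H$ and $J^*$ output in Step~4 meets the bound.

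For the running time, each step is polynomial: Step~1 is (HAM), shown to terminate correctly in polynomial time by Theorem~\ref{thm:optham}; Step~2 is a minimum cost common base of the $1$-tree matroid and a partition matroid, solved by Edmonds' matroid intersection algorithm; Step~3 is a minimum cost $T$-join, computable in polynomial time (cf.~\cite{Schrijver}); and Step~4 is a single comparison. Hence (TOUR) runs in polynomial time and returns a tour which, after shortcutting, is the desired Hamiltonian cycle.

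I expect the main obstacle to be the reduction from the many-tour convex combination of Theorem~\ref{thm:sufficient} to the single tour $J^*$: one must be sure that committing to one minimum cost rainbow $1$-tree $F^*$ and then to a single minimum cost $T_{F^*}$-join does not lose the guarantee, even though the trees $F_i$ in the decomposition have different odd-degree sets $T_{F_i}$. The convexity inequalities $c(F^*) \le cx$ and $c(\text{min } T_{F^*}\text{-join}) \le cy$ are exactly what make this decoupling legitimate, and the one point requiring care is verifying that $F^*$ genuinely satisfies property (B) so that the $T_{F^*}$-join polyhedron really contains $y$; this rests on $H$ meeting each square in a perfect matching, which is where the structure supplied by Lemma~\ref{lem:Ham} and the partition $\Pscr$ enters.
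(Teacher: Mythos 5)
Your proposal is correct and follows essentially the same route as the paper's own proof: the same vector $y=\frac{2}{3}x-\frac{1}{6}\chi^H$, the same appeal to property (B) and the Claim in the proof of Theorem~\ref{thm:sufficient} to bound the minimum $T_{F^*}$-join cost by $c(y)$, the same bound $c(F^*)\le cx$ via Theorem~\ref{thm:michel}, and the same final inequality $\min\{c(H),c(J^*)\}\le \frac{1}{7}c(H)+\frac{6}{7}c(J^*)\le\frac{10}{7}LP(c)$ followed by shortcutting. Your additions are only helpful elaborations the paper leaves implicit, notably that $y$ is supported on $E_x$, so its feasibility for the $T_{F^*}$-join polyhedron of $K_n$ carries over to $G_x$, where Step~3 actually computes the join.
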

		
		\begin{proof} We have to prove only the latter part of the last sentence concerning the algorithm, as the rest has already been included in Corollary~\ref{cor:approxalg}.

			Let $H$ be any Hamiltonian cycle of the support graph $G_x$ that contains all of the $1$-edges, provided algorithmically in polynomial time by (HAM), and let $J^*$ be the tour of $G_x$ determined in polynomial time by (TOUR).   
			 Note that the $1$-tree $F^*$ from Step 2 of (TOUR) satisfies the condition of property (B), and thus by the Claim in the proof of Theorem \ref{thm:sufficient} we have
			 $y:=\frac{2}{3}x - \frac16\chi^H$ is in the $T_{F^*}$-join polyhedron (\ref{Tjoinpoly}) of $K_n$ . Thus the cost of the minimum cost $T_{F^*}$-join found in Step 3 is at most $c(y)$.   Similarly, by Theorem \ref{thm:michel}, $c(F^*)\le c(x)$, which gives
	$c(J^*)\le c(x) + c(y)$. Thus 
	\[\min\{c(H), c(J^*)\}\le  \frac{1}{7}c(H) + \frac{6}{7}c(J^*)\le  \frac{1}{7}c(H) + \frac67 c(x) + \frac67 (\frac23 c(x) -  \frac16 c(H))=
 \frac{10}{7}LP(c)\]  is the cost of a tour, and shortcutting this tour we get a Hamtiltonian cycle of $K_n$ not larger in cost.  
	\qed
		\end{proof}

		

		
		
		Note that this proof actually used less than what (HAM) produces: for TOUR it is sufficient to find {\em any} Hamiltonian cycle in $G_x$ containing its  $1$-edges, not necessarily the optimal one! However, sharper or more general results may need the exact optimum here. This motivates us to sketch some details about delta-matroids that are in the background.  
		
		Delta-matroids were introduced by Bouchet \cite{B93}. For the introduction and the basics about them we follow \cite{BC95}: the family $\Dscr\ne\emptyset$ of subsets of a finite set $S$, or the pair $(S,\Dscr)$ is called a {\em delta-matroid} if the following symmetric exchange axiom (also called the $2$-step axiom) is satisfied: For $D_1, D_2\in\Dscr$ and $j\in D_1\Delta D_2$ there exists $k\in D_1\Delta D_2$ such that $D_1\Delta\{j,k\}\in\Dscr$. Note that $k=j$ is possible, and we then naturally define $\{j,j\}:=\{j\}$.

		A delta-matroid $(S,\Dscr)$ may have an exponential number of elements, too many to be given explicitly. Fortunately, in order to work with delta-matroids we need less than giving all of its elements as input.  
		The basic and simple greedy algorithm already necessitates a solution of the following problem  \cite[(2.1)]{BC95}: let  $(S,\Dscr)$ be a delta-matroid, then for given $A, B\subseteq S$ decide  whether there exists $D\in\Dscr$ such that $D\supseteq A$, $D\cap B=\emptyset$. Let us call an oracle that solves this problem the {\em extendability oracle} for the given delta-matroid. This oracle can be executed in polynomial time for all the relevant applications, and we will have to check that this also holds for the delta matroid for which we need it. 
		
		Think about the set $A$ as a set of elements chosen to be in the solution, and $B$ the set of elements chosen not to be in the solution. Roughly, for an objective function $c\in\Rset^S$, the greedy algorithm considers the elements  of $S$ in decreasing order of the absolute values and attempts to put a considered $s\in S$  into $A$ if $c(s)\ge 0$  and ``it is possible'' to be put in $A$,  and to $B$ if  $c(s)\le 0$ and ``it is possible'' to put it in $B$, where ``it is possible'' means a YES answer of the extendibility oracle with the attempted update of $A$ and $B$ (see precisely below). 
		
		The following theorem is a generalization of  Lemma~\ref{lem:Ham} and therefore a direct proof of it.  Given a square graph $(G,M)$, let  $R=R(G)\subseteq E(G)\setminus M$ be a {\em reference set} containing  exactly one edge from each square of  $E(G)\setminus M$; $|R|=|V(G)|/4$. Let $\Hscr=\Hscr(G)$ be the set of Hamiltonian cycles of $G$ containing $M$.

		\begin{theorem}\label{thm:ham} Let $(G,M)$ be a square graph. Then $\Hscr\ne\emptyset$, and \[\Dscr:=\{H\cap R:  H \hbox{ is a Hamiltonian cycle of $G$ containing $M$}\}\] is a delta-matroid. 
		\end{theorem}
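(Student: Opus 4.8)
The plan is to reduce the statement to a purely combinatorial property of transition systems on a $4$-regular graph and then verify the symmetric exchange axiom by a colouring argument. Non-emptiness $\Hscr\neq\emptyset$ is exactly Lemma~\ref{lem:Ham} (and is recovered here as the special case $\Dscr\neq\emptyset$). For the rest I would reuse the contraction from the proof of Lemma~\ref{lem:Ham}: contracting all squares of $G$ yields a connected $4$-regular multigraph $G'$ whose edge set is exactly $M$ and whose nodes are the squares, each carrying the forbidden \emph{diagonal} bitransition. A Hamiltonian cycle $H\supseteq M$ of $G$ restricts to one of the two perfect matchings in each square, and this choice is precisely a choice, at each node of $G'$, of one of the two \emph{allowed} bitransitions (the third, diagonal one being forbidden); $H$ is then the associated transition system, and $H$ is a single cycle iff this transition system forms one Eulerian trail of $G'$, i.e. is \emph{connected}. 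Since $r_C\in H\cap R$ records exactly which of the two allowed bitransitions is used at the node $C$, the map $H\mapsto H\cap R$ is a bijection from $\Hscr$ onto the family of connected allowed transition systems, encoded as subsets of $R$. Thus proving that $\Dscr$ is a delta-matroid amounts to the following: if $\tau_1,\tau_2$ are two connected allowed transition systems differing exactly on a node set $S^*$ (corresponding to $D_1\Delta D_2$), then for every $j\in S^*$ there is $k\in S^*$ (possibly $k=j$) such that flipping $\tau_1$ at $j$ and $k$ to their other allowed bitransition is again connected.

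The local tool I would establish first is the standard three-bitransitions count. Fixing the transitions at all nodes but one node $v$, and letting $X_v$ be the pairing of the four edge-ends of $v$ induced by the rest of the trail structure, the number of closed trails equals $c_0+1$ for the two bitransitions different from $X_v$ and $c_0+2$ for the bitransition equal to $X_v$, where $c_0$ is the number of closed trails avoiding $v$. Consequently, changing a single transition between the two allowed bitransitions changes the number of trails by $-1$, $0$ or $+1$, and it decreases the count precisely when the current bitransition equals $X_v$, equivalently when the two passages at $v$ lie on two different trails. Applying this at $j$ in $\tau_1$ (which has one trail), flipping $j$ yields either one trail, in which case $k=j$ works, or exactly two trails $T',T''$, which is the case I must resolve.

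In the two-trail case I would look for $k\in S^*\setminus\{j\}$ whose two passages lie on different trails of $\tau_1$ flipped at $j$ (a \emph{bichromatic} node with respect to the edge-partition $T'\cup T''$); by the local tool, flipping such a $k$ merges $T'$ and $T''$ back into one trail, giving $D_1\Delta\{j,k\}\in\Dscr$. To produce this $k$ I would use $\tau_2$: it is connected and differs from the two-trail system only on $W:=S^*\setminus\{j\}$. Suppose, for contradiction, that every node of $W$ is monochromatic. Then at every node the transition of $\tau_2$ pairs $T'$-edges only with $T'$-edges and $T''$-edges only with $T''$-edges: at nodes outside $W$ because $\tau_2$ agrees there with the two-trail system, whose transitions respect the colours at monochromatic nodes trivially and at bichromatic nodes because each passage is monochromatic; and at nodes of $W$ because flipping a monochromatic node keeps all four equally coloured edges paired among themselves. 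But then every $\tau_2$-trail is monochromatic, so $\tau_2$ would have at least two trails, one per non-empty colour class, contradicting its connectedness. Hence a bichromatic $k\in W$ exists, completing the exchange axiom.

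The main obstacle is exactly this last step. The naive observation that $G'$ is connected only gives that $T'$ and $T''$ share \emph{some} node, hence that a bichromatic node exists; but that node could be $j$ itself or could lie outside $S^*$, whereas the exchange axiom forces me to find one inside $S^*\setminus\{j\}$. The colouring argument above, driven by the \emph{connectedness} of the second system $\tau_2$ and by the fact that flips at monochromatic nodes preserve the colour classes, is what pushes a usable $k$ into $W$. Throughout, one only needs to note that every flip interchanges the two allowed bitransitions and so never uses the forbidden diagonal one; this guarantees that the resulting connected transition system is genuinely a Hamiltonian cycle containing $M$, so that its trace on $R$ indeed lies in $\Dscr$.
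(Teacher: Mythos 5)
Your proof is correct, but it takes a different route from the paper's. The paper never contracts: it argues directly in $G$, noting that $H_1\triangle C$ is a $2$-factor with at most two components; letting $Q$ be the cut between them, the connectivity of $H_2$ yields an edge of $H_2$ crossing $Q$, and since no edge of $M$ (all of $M$ lies inside the components) and no edge of $H_2\cap C$ crosses $Q$, that edge lies in a square $D$ whose $H_2$-matching crosses $Q$ while its $H_1$-matching does not --- so $D$ is automatically a square where $H_1$ and $H_2$ differ, and $H_1\triangle C\triangle D$ is connected. Your argument is the same connectivity insight transported to the contracted $4$-regular multigraph: your ``bichromatic node of $W$'' is exactly a square of the difference set whose other matching crosses $Q$, and your monochromatic-colouring contradiction is the contrapositive of the paper's direct cut-crossing step. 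What your route buys: the trail-counting lemma is a clean reusable tool, and you in effect reprove Bouchet's theorem that transition systems of Eulerian trails form a delta-matroid, making explicit the identification that the paper only remarks on after its proof (that $\Dscr$ is the same as Bouchet's delta-matroid \cite{B93}); you also correctly isolate the one genuine subtlety, that the exchange partner must be found inside $S^*\setminus\{j\}$, which the paper's cut formulation gets for free. What it costs: brevity, plus one unaddressed technicality --- contracting a square that has a chord in $M$ creates a loop, so ``$G'$ is $4$-regular with nodes the squares, each carrying a diagonal forbidden bitransition'' is not literally the setting you borrow from Lemma~\ref{lem:Ham}, whose proof first reduces chorded squares away before contracting. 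Your argument does survive loops (a loop node is always monochromatic and flipping it never changes the trail structure, so such a $j$ immediately gives $k=j$, and such a node is never needed as $k$), but this should be stated; the paper's uncontracted argument sidesteps the issue entirely.
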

		
		\begin{proof} To prove the first part, replace one by one, one after the other in arbitrary order, the squares by one of their two perfect matchings. 
			
			\smallskip\noindent
			{\bf Claim}: With at least one of the two choices  the graph remains connected. 
			
			\smallskip
			It follows that at the end of the procedure we get a connected graph with all degrees equal to $2$ containing $M$, that is, a  Hamiltonian cycle containing $M$.  
			
			\medskip 
			To prove the claim note that the graph $G_C$ we get after deleting the square $C$  has at most $2$ components. So one of the two perfect matchings of $C$ must join the two components, since otherwise the graph obtained by adding back $C$ is also not connected. The claim is proved.

			\medskip So we proved that $\Dscr\ne\emptyset$. In order to prove that it is a delta-matroid, we have to check that {\em for any two Hamiltonian cycles $H_1\ne H_2$, and square $C$ of $G-M$ where $H_1$ and $H_2$ do not use the same perfect matching of $C$, either $H_1\triangle C$ is also a Hamiltonian cycle or there exists a square $D$ of $G-M$ 
				so that $H_1\triangle C\triangle D$ is a Hamiltonian cycle.}

			To prove this,  suppose $H_1\triangle C$ is not a Hamiltonian cycle. It is still a {\em $2$-factor} -- subset of edges with all degrees equal to $2$ --  with two components, with the cut $Q\subseteq E(G)$ between the two components. Since $H_2$ is connected, it contains a square $D$ so that for one of the two  perfect matchings of  $D$:  $H_2\cap D \cap Q\ne\emptyset$. But then clearly, $H_1$ and $H_2$ do not use the same perfect matching of $D$, and $H_1\triangle C \triangle D$ is again connected, and thus a Hamiltonian cycle containing $M$.\qed
		\end{proof}
		
		Let us call the delta-matroid $\Dscr$ of the theorem {\em square}. It is the same delta-matroid as Bouchet's delta-matroid of transitions in Eulerian trails \cite{B93}.
		
		\begin{lemma}
			For square delta-matroids the extendability oracle can be computed in polynomial time. 
		\end{lemma}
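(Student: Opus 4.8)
The claim is that for a square delta-matroid $\Dscr$, the extendability oracle runs in polynomial time. Recall that this oracle, for given $A, B \subseteq R$, must decide whether there exists some $D \in \Dscr$ with $D \supseteq A$ and $D \cap B = \emptyset$. Since $\Dscr = \{H \cap R : H \in \Hscr\}$, where $R$ contains exactly one reference edge from each square, an element $D \in \Dscr$ records, for each square, whether the Hamiltonian cycle uses the perfect matching containing the reference edge or the other one. So the oracle is really asking: does there exist a Hamiltonian cycle $H$ containing $M$ that uses the reference-edge matching in every square indexed by $A$, and the non-reference matching in every square indexed by $B$?

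**Translating the oracle into a connectivity question.** The plan is to reduce the oracle query to a single connectivity check on a modified graph, leveraging the structure exposed in the proof of Theorem~\ref{thm:ham}. First I would observe that fixing the choice of perfect matching for a square is equivalent to deleting the two edges of the unwanted matching in that square. Thus, given $A$ and $B$, I commit in each square indexed by $A$ to the matching containing its reference edge, and in each square indexed by $B$ to the complementary matching; this deletes two specific $\frac12$-edges per committed square. Let $G'$ be the resulting graph on the same node set, still containing $M$. The squares \emph{not} indexed by $A \cup B$ remain undecided, each keeping all four of its $\frac12$-edges. The key step is then to recognize that, exactly as in the existence argument (the Claim in both Lemma~\ref{lem:Ham} and Theorem~\ref{thm:ham}), the undecided squares can always be resolved to preserve connectivity provided the committed graph $G'$ is connected.

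**The central argument.** I claim the oracle answers YES if and only if $G'$ is connected. For the forward direction: any $H \in \Hscr$ realizing the constraints is a connected spanning subgraph contained in $G'$, so $G'$ is connected. For the reverse direction, suppose $G'$ is connected. I would then greedily resolve the undecided squares one at a time, each time keeping whichever of its two matchings preserves connectivity; the Claim from the proof of Theorem~\ref{thm:ham} guarantees that at least one of the two choices keeps the graph connected (deleting a square yields at most two components, and at least one matching of the square rejoins them). After all squares are resolved we obtain a connected subgraph with every node of degree $2$ that contains $M$ and respects the $A$/$B$ constraints, i.e.\ a Hamiltonian cycle witnessing $D \supseteq A$, $D \cap B = \emptyset$. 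Hence the oracle reduces to one connectivity test on $G'$, which is linear-time, plus the trivial preprocessing of building $G'$; this is polynomial, proving the lemma.

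**Anticipated obstacle.** The routine part is the connectivity check itself. The step that requires care, and which I expect to be the crux, is verifying the reverse implication rigorously: that connectivity of the \emph{partially} committed graph $G'$ is genuinely sufficient, so that greedily resolving the remaining free squares never gets stuck. This is precisely where I invoke the two-component Claim, but I must confirm that the Claim applies unchanged when some squares have already been deleted down to a chosen matching (i.e.\ that the degree/parity bookkeeping and the ``at most two components after removing a square'' argument survive in $G'$). Once that is in hand, the equivalence ``YES iff $G'$ connected'' is immediate and the polynomial bound follows with no further calculation.
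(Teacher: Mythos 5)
Your proposal is correct and takes essentially the same route as the paper: commit the matchings forced by $A$ and $B$ (after the trivial check $A\cap B=\emptyset$), test connectivity of the resulting graph, and if it is connected resolve the remaining free squares greedily, invoking the claim that one of the two matchings of each square preserves connectivity. The obstacle you flag is benign, and the paper glosses over it in exactly the same way: in the partially committed graph the nodes of already-resolved squares have even degree, so the parity argument (after deleting a square, every component contains an even and positive number of its four degree-$1$ nodes, hence there are at most two components, and some matching must rejoin them) survives verbatim.
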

		
		\begin{proof} Assume $A, B\subseteq R$. If $A\cap B\ne\emptyset$ the answer of the extendability oracle is NO. So let $A\cap B=\emptyset$.
			
			For edges  $a\in A$, choose in the square of $a$ the perfect matching containing $a$, and for $b\in B$ in the square of $b$ the one that does not contain $b$. 
			
			If the obtained graph is not connected, then clearly,   the extendability oracle gives a NO answer. In case it is  connected, replace each of the remaining squares (those disjoint from $A\cup B$) one after the other by  one of its two perfect matchings, so that it remains connected.  One of the two choices does indeed keep the graph connected, since if not, adding both perfect matchings, (like in the proof of Theorem~\ref{thm:ham}) it would also not be connected.\qed  
		\end{proof}

\noindent 
		{\bf General Greedy Algorithm (GREEDY) for Delta-Matroids} \cite{BC95}
		
		\smallskip\noindent
		{\bf Input}: Delta-matroid $(S,\Dscr)$ given with the extendability oracle, and cost vector $c\in\Rset^S$.
		
		\smallskip\noindent
		{\bf Task}: Determine $D\in\Dscr$, equivalently the vector $\chi_D$, of minimum cost.

		\smallskip\noindent
		1. Order the $S$ decreasingly in the absolute values  of $c$, that is, $|c_1|\ge |c_2|\ge \ldots |c_{|S|}|,$ where we can suppose $S=\{1,\ldots,{|S|}\}$. Define $i:=0$, $A_0:=B_0:=\emptyset$. 

\noindent
		2. i:=i+1 ; while $i\le n$ do: 
		\begin{itemize}
			\item[-] If $c_i \le 0$: 
			
			In case the extendability oracle gives a YES answer with input $A_i:=A_{i-1}\cup\{i\}$ and $B_i:=B_{i-1}$, then keep this definition of $A_i$, $B_i$.
			
			In case the extendability oracle gives a NO answer, $A_i:=A_{i-1}$, $B_i:=B_{i-1}\cup\{i\}$. (Since $(A_i,B_i)$ is extendable, the answer for the latter choice is  YES in this case.) 
			
			\item[-] If  $c_i >0$:
			
			In case the extendability oracle gives a YES answer with input $A_i:=A_{i-1}$ and $B_i:=B_{i-1}\cup\{i\}$, then keep this definition of $A_i$, $B_i$. 
			
			In case the extendability oracle gives a NO answer, $A_i:=A_{i-1}\cup\{i\}$, $B_i:=B_{i-1}$.
			\end{itemize}  		
		\noindent
			3. Output $D:=A_n$.
			
			\noindent{\bf end}

			\medskip 
			It can be readily checked that $\{A_n, B_n\}$ {\em is a partition of $S$, $A_n\in\Dscr$, and it is also not difficult to check that   $c(A_n)=\min \{c(D): D\in \Dscr\}$.}  Moreover, (HAM) is a special case,  and we have:
			
			\begin{theorem}\label{thm:optham}
			The output of (HAM) for a square graph $(G,M)$ is a Hamiltonian cycle of $G$ containing $M$ of minimum cost.
			\end{theorem}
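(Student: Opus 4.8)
The plan is to show that (HAM) is literally an execution of (GREEDY) on the square delta-matroid $\Dscr$ of Theorem~\ref{thm:ham}, for an appropriately chosen linear cost on the reference set $R$; the optimality of (HAM) then follows from the stated optimality of (GREEDY), namely $c(A_n)=\min\{c(D):D\in\Dscr\}$.

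First I would set up the dictionary between Hamiltonian cycles containing $M$ and the ground set $S=R$ of $\Dscr$. For each square $C_i$ denote its two perfect matchings by $P_i^+$ (the one containing the reference edge $r_i\in R$) and $P_i^-$ (the one avoiding it). A Hamiltonian cycle $H\in\Hscr$ uses exactly one matching from each square, so it is determined by $D:=H\cap R$, with $r_i\in D$ iff $H$ uses $P_i^+$. Define $\tilde c\in\Rset^R$ by $\tilde c(r_i):=c(P_i^+)-c(P_i^-)$. Then
\[
c(H)=c(M)+\sum_i c(P_i^-)+\sum_{r_i\in D}\bigl(c(P_i^+)-c(P_i^-)\bigr)=\text{const}+\tilde c(D),
\]
where the constant is independent of $H$. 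Hence minimizing $c(H)$ over $\Hscr$ is the same as minimizing $\tilde c(D)$ over $D\in\Dscr$, and (GREEDY) run with cost $\tilde c$ produces such a minimizer.

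Next I would check the two ingredients that make (HAM) coincide with this run of (GREEDY). Since $|\tilde c(r_i)|=|c(P_i^+)-c(P_i^-)|=w_{C_i}$, the decreasing order of $|\tilde c|$ used by (GREEDY) is exactly the order of the $w_C$ used in Step~1 of (HAM). For the per-step behaviour, recall from the proof of the lemma on the extendability oracle that the oracle with data $(A_i,B_i)$ fixes the matching $P_j^+$ on squares with $r_j\in A_i$ and $P_j^-$ on squares with $r_j\in B_i$, leaves the remaining squares as full $4$-cycles, and answers YES iff the resulting graph is connected. At iteration $i$ this is precisely the connectivity test that (HAM) performs on $C_i$: squares $C_1,\dots,C_{i-1}$ are already fixed by the earlier choices, $C_i$ is fixed by the trial, and $C_{i+1},\dots,C_t$ are still present as $4$-cycles.

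Finally I would match the decision rules. When $\tilde c(r_i)\le 0$ the cheaper matching of $C_i$ is $P_i^+$, and (GREEDY) first tries to put $r_i$ into $A$, i.e.\ to select $P_i^+$; when $\tilde c(r_i)>0$ the cheaper matching is $P_i^-$ and (GREEDY) first tries to put $r_i$ into $B$, i.e.\ to select $P_i^-$. In both cases (GREEDY) first attempts the cheaper matching, keeps it if the oracle answers YES (connected), and is otherwise forced into the other matching, which by the Claim inside (HAM) is guaranteed to be connected. This is exactly (HAM)'s rule: take the cheaper matching whenever both choices keep the graph connected, and take the unique connected choice otherwise. Thus the sequence of choices made by (HAM) coincides with an execution of (GREEDY), so the Hamiltonian cycle output by (HAM) corresponds to $A_t$, the minimizer of $\tilde c$ over $\Dscr$, and is therefore of minimum cost among Hamiltonian cycles containing $M$. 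The only point requiring care is the bookkeeping of the two sign conventions together with the direction of inclusion (reference edge in $D$ corresponding to the matching $P_i^+$), and the verification that the oracle's connectivity test and (HAM)'s connectivity test examine the same fixed and free squares at each step.
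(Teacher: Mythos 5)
Your proposal is correct and takes essentially the same route as the paper: the paper justifies Theorem~\ref{thm:optham} precisely by the observation that (HAM) is a special case of (GREEDY) run on the square delta-matroid of Theorem~\ref{thm:ham}, and you have simply made explicit the dictionary (the reduced cost $\tilde c$ on the reference set $R$, the agreement of the orderings via $|\tilde c(r_i)|=w_{C_i}$, the identification of the extendability oracle with (HAM)'s connectivity test, and the sign conventions) that the paper leaves to the reader. The paper additionally sketches a second, self-contained exchange argument in the style of the spanning-tree greedy proof right after presenting (HAM), but the delta-matroid specialization you use is the argument the paper formally attaches to the theorem, so no comparison of substance is needed.
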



		
		
		
		\section*{Acknowledgements}
We are indebted to Michel Goemans for an email from his sailboat with a pointer to Theorem \ref{thm:michel}; to Alantha Newman, Frans Schalekamp, Kenjiro Takazawa and Anke van Zuylen for helpful discussions. 


\begin{thebibliography}{99}
%
%
\bibitem{BB} G. Benoit, S. Boyd, Finding the exact integrality gap for small traveling salesman problems, Mathematics of Operations Research 33(4), 921--931 (2008)

\bibitem{B93} A. Bouchet, Compatible Euler Tours and Supplementary Eulerian Vectors, Europ. J. Combinatorics 14, 513--520 (1993)  

\bibitem{BC95} A.~Bouchet, W.~Cunningham, Delta-matroids, Jump Systems, and Bisubmodular Polyhedra, SIAM Journal on Discrete Mathematics 8(1), 17--32 (1995)


\bibitem{BC} S. Boyd, R. Carr, Finding low cost TSP and 2-matching solutions using certain half-integer subtour vertices,  Discrete Optimization 8,  525--539 (2011)

\bibitem{BIT} S.~Boyd, S.~Iwata, K.~Takazawa, Finding $2$-factors closer to TSP tours in cubic graphs, SIAM Journal on Discrete Mathematics,  27(2), 918--939 (2013)

\bibitem{BL} H. Broersma, X. Li, Spanning trees with many or few colors in edge-colored graphs, Discussiones Mathematicae Graph Theory  17,  259--269 (1997)

\bibitem{CR} R. Carr, R. Ravi, A new bound for the $2$-edge connected subgraph problem, Proceedings of Integer Programming and Combinatorial Optimizaiton (IPCO), Lecture Notes in Computer Science, Springer,  112--125 (1998)

\bibitem{CV} R. Carr, S. Vempala, On the Held-Karp relaxation for the asymmetric and symmetric travelling salesman problem, Mathematical Programming A 100, 569--587 (2004) 

\bibitem{chr} N. Christofides, Worst case analysis of a new heuristic for the traveling salesman problem, Report 388, Graduate School of Industrial Administration, Carnegie-Mellon University, Pittsburgh, PA, 1976.

\bibitem{cunningham} W. H. Cunningham, On bounds for the metric TSP, manuscript, School of Mathematics and Statistics, Carleton University, Ottawa, Canada, 1986.

\bibitem{EdmondsMIpolytope}
J.~Edmonds,
Submodular functions, matroids and certain polyhedra.
In: Combinatorial Structures and Their Applications;
Proceedings of the Calgary International Conference on Combinatorial Structures and Their Applications 1969
(R.~Guy, H.~Hanani, N.~Sauer, J.~Sch\"onheim, eds.),
Gordon and Breach, New York, 1970.

\bibitem{EdmondsJ73} J. Edmonds, E.~L. Johnson,
Matching, euler tours and the chinese postman,
Mathematical Programming 5(1),  88--124 (1973)


\bibitem{G}
S.O.~Gharan,  A.~Saberi, M.~Singh, 
A randomized rounding approach to the traveling salesman problem,
Proceedings of the 52nd Annual IEEE Symposium on Foundations of Computer Science, 550--559 (2011)

\bibitem{goemans}
M.X. Goemans, Worst-case comparison of valid inequalities for the TSP, Mathematical Programming 69, 335--349 (1995)

\bibitem{JC}
C.~Gottschalk, J.~Vygen, Better {\it s}-{\it t} -tours
by Gao trees, Mathematical Programming 172(1-2), 191--207 (2018)

\bibitem{theTSPbook} M.~Gr\"otschel, M.~W.~Padberg, ``Polyhedral Theory'', in E.L. Lawler, J.K. Lenstra, A.H.G. Rinnooy Kan, D.B. Shmoys, eds., The Traveling Salesman Problem -- A Guided Tour of Combinatorial Optimization, Wiley, Chichester, 1985.

\bibitem{HNR} A. Haddadan, A Newman, R. Ravi, Shorter tours and longer detours: Uniform covers and a bit beyond, arXiv 1707.05387v3 [cs.DS], 2017.

\bibitem{KS}  T.~Kaiser, R.\v{S}krekovski, Cycles intersecting edge-cuts of prescribed sizes, SIAM Journal on Discrete Mathmatics 22(3), 861--874 (2008) 

\bibitem{Kotzig} A. Kotzig, Moves without forbidden transitions in a graph, Mat. Casopis Sloven, Akad. Vied 18,  76--80 (1968)


\bibitem{SWZ} F. Schalekamp, D. Williamson, A. van Zuylen, 2-Matchings, the Traveling Salesman Problem, and the Subtour LP: A Proof of the Boyd-Carr Conjecture, Math. Oper. Res. 39(2), 403--417 (2014) 

\bibitem{Schrijver} A.~Schrijver, Combinatorial Optimization, Springer-Verlag Berlin Heidelberg, 2003.

\bibitem{Sebo14}
A. Seb\H{o}, Y. Benchetrit, M. Stehlik,  Problems about uniform covers, with tours and detours, Matematisches Forschungsinstitut Oberwolfach Report No.~51/2014, DOI: 10.4171/OWR/2014/51,  2912--2915 (2015) 

\bibitem{AndrasJens}  A. Seb\H{o}, J. Vygen, Shorter tours by nice ears, Combinatorica 34,  597-629 (2014)

\bibitem{AA} A.~Seb\H{o}, A.~van Zuylen, The Salesman's Improved Paths: A 3/2+1/34 Approximation, Foundations of Computer Science (FOCS), 118-127 (2016) 


\bibitem{SW} D. Shmoys, D. Williamson, Analysis of the Held-Karp TSP bound:  A monotoncity property with application, Information Processiong Letters 35,  281--285 (1990)

\bibitem{Wolsey80} L. Wolsey, Heuristic analysis, linear programming and branch and bound, Mathematical Programming Study 13,  121--134 (1980)

\end{thebibliography}


\end{document}